\newcommand{\real}{\mathbb{R}}
\newcommand{\prob}{\mathbf{P}}
\newcommand{\ind}{\mathds{1}}
\DeclareMathOperator{\Var}{Var}
\DeclareMathOperator{\ex}{\mathbf{E}}
\newcommand{\abs}[1]{\left\lvert#1\right\rvert}
\newcommand{\set}[1]{\left\{#1\right\}}
\newtheorem{theorem}{Theorem}[section]
\newtheorem{lemma}[theorem]{Lemma}
\newtheorem{proposition}[theorem]{Proposition}
\newtheorem{corollary}[theorem]{Corollary}
\theoremstyle{remark}
\newtheorem{remark}[theorem]{Remark}
\theoremstyle{definition}
\numberwithin{equation}{section}
\begin{document}
\title{Gatheral  double stochastic volatility  model with~Skorokhod  reflection}
\author{Yuliya Mishura}
\address[Y.\,Mishura]{Department of Probability, Statistics and Actuarial Mathematics, Taras Shev\-chen\-ko National University of Kyiv, 64/13, Volodymyrska st., 01601 Kyiv, Ukraine}
\email{yuliyamishura@knu.ua}
\thanks{YM is supported by The Swedish Foundation for Strategic Research, grant UKR24-0004, and by the Japan Science and
Technology Agency CREST, project reference number JPMJCR2115.}
\author{Andrey Pilipenko}
\address[A.\,Pilipenko]{Institute of Mathematics of Ukrainian National Academy of Sciences, Te\-re\-schen\-kiv\-ska st. 3, 01601 Kyiv, Ukraine}
\address[A.\,Pilipenko]{Section of Mathematics, University of Geneva, UNI DUFOUR,
24, rue du Général Dufour,
Case postale 64,
1211 Geneva 4, Switzerland}
\email{pilipenko.ay@gmail.com}
\thanks{AP thanks  the Swiss National Science Foundation for partial support  of the paper (grants No.~IZRIZ0\_226875, No.~200020\_200400, No. 200020\_192129).}
\author{Kostiantyn Ralchenko}
\address[K.\,Ralchenko]{Department of Probability, Statistics and Actuarial Mathematics, Taras Shevchenko National University of Kyiv, 64/13, Volodymyrska st., 01601 Kyiv, Ukraine}
\email{kostiantynralchenko@knu.ua}
\address[K.\,Ralchenko]{School of Technology and Innovations, University of Vaasa, P.O. Box 700, Vaasa, FIN-65101, Finland}
\email{kostiantyn.ralchenko@uwasa.fi}
\thanks{KR is supported by the Research Council of Finland, decision number 367468.}
\thanks{YM and KR acknowledge that the present research is carried out within the frame of the ToppForsk project no.~274410 of the Research Council of Norway with the title STORM: Stochastics for Time-Space Risk Models}
\begin{abstract}
We investigate the Gatheral model of double mean-reverting stochastic volatility, in which the drift term itself follows a mean-reverting process, and the overall model exhibits mean-reverting behavior. We demonstrate that such processes can attain values arbitrarily close to zero and remain near zero for extended periods, making them practically and statistically indistinguishable from zero. To address this issue, we propose a modified model incorporating Skorokhod reflection, which preserves the model's flexibility while preventing volatility from approaching zero.
\end{abstract}
\keywords{Stochastic volatility, Gatheral model, Cox--Ingersoll--Ross process, CKLS model, mean reversion, stochastic differential equation, strong solution, Skorokhod reflection}
\subjclass{60H10, 91G30, 91G80}
\maketitle

\section{Introduction}
The famous Black--Scholes model, being classical and basic in stochastic finance, is nevertheless constantly criticized as not flexible enough and not corresponding to real price changes in the market. In an effort to make this model more flexible and realistic, most researchers introduce into it a stochastic diffusion coefficient, in other words, stochastic volatility. 

As for stochastic volatility models, there are a lot of them and they are very diverse: Hull and White model, Heston model, Constant elasticity of variance model, GARCH model, rough volatility models and others. As a review of some models,  we recommend the paper \cite{DiKuMiYu}.

The choice of a model for stochastic volatility is dictated by completely clear requirements: the model has to be non-negative and not grow too fast, so as not to drive prices too high and also not to decrease to zero, in order not to over- or underestimate the market prices. It should also be flexible enough.

From this point of view, the Gatheral model of double mean reverting   stochastic volatility, in which the drift contains mean reverting process, and the model is mean-reverting itself, is very attractive. That is why we chose this model as the object of our study.

However, models of this type contain some internal danger, which is often neglected: namely, such processes can take on fairly small values  for a fairly long time, that is, be practically and statistically indistinguishable from zero. During this period they lose their flexibility, cannot control the market, and the price is actually subject only to its   drift, which, of course, is unrealistic.

There are various possible approaches to correcting such models. Our approach is based on models with Skorokhod reflection, which simultaneously preserves the flexibility of the model and does not allow volatility to be near zero. The choice of the reflection level is a subject for a separate discussion, we discuss this issue.

The contents of the paper are presented in more detail at the end of the Section \ref{Prelim}, which we will now turn to.  
 
\section{Preliminaries}\label{Prelim}

Let $(\Omega, \mathcal{F}, \prob)$ be a stochastic basis with filtration $\mathbb{F}=(  \mathcal{F}_t)_{t\ge 0}$, and let $(w, W, B)=(w_t, W_t, B_t)_{t\ge 0}$ be three possibly pairwise correlated Wiener processes  with respect to the filtration $\mathbb{F}$. In his seminal paper \cite{gatheral2008}, J.\ Gatheral introduced and convincingly motivated a flexible model for the double mean-reverting dynamics of an asset price of the form
\begin{align}
dS_t &= S_t \sqrt{X_t}\,dw_t,
\label{eq:dS}
\\
dX_t &= (a_1 Y_t - b_1 X_t)\,dt + \sigma_1 X_t^{\alpha_1}\,dW_t,
\label{eq:dX}
\\
dY_t &= (a_2 - b_2 Y_t)\,dt + \sigma_2 Y_t^{\alpha_2}\,dB_t,
\qquad t \ge 0,
\label{eq:dY}
\end{align}
where $a_i\ge 0, b_i>0, \sigma_i > 0$, $\alpha_i \in \left [\frac12, 1\right]$,
$   X_0, Y_0, S_0 > 0.$ We emphasize that throughout the paper all initial values are nonrandom and strictly positive. 

In the case $\alpha_i = \frac12$, $i = 1, 2$, 
the model was called double Heston, 
in the case $\alpha_i = 1$, $i = 1, 2$, 
double lognormal, and in the general case double CEV (Constant Elasticity of Variance) model.
Note that in the case $\alpha_i \in (\frac12, 1)$, a more common name for the respective process is the CKLS (Chan--Karolyi--Longstaff--Sanders) model \cite{ckls1992}.

Since the publication of \cite{gatheral2008}, many papers were devoted to the model \eqref{eq:dS}--\eqref{eq:dY}, including its approximations and respective numerics.
We mention just \cite{albuhayri2022, bayer2013} in this connection, without claiming to present  an exhaustive list.

However, the analytic properties of the model \eqref{eq:dS}--\eqref{eq:dY} have not been sufficiently studied until now.
Some properties (such as existence and uniqueness of solutions and comparison theorems) have been considered as obvious (as they really are such, to some extent), others have not been addressed at all.
What we have in mind: assume that $X$ has ``too many'' zeros, then it is not a suitable model for the asset price \eqref{eq:dS}. Moreover, even if the process $X$ is strictly positive but is very close to zero for some time, the situation is also inappropriate. So, in addition to summarizing the more apparent properties, our paper examines this previously undescribed situation in detail. 

The structure of the next part of the paper is as follows.
Section~\ref{sec-auxiliary} is devoted to the asymptotic properties of the \emph{internal} process $Y$, governed by equation \eqref{eq:dY}. We analyze the behavior of its mean and variance, as well as its pathwise asymptotic behavior. The aim of this section is twofold: to review several known results and to establish new findings of a similar nature that, to the best of our knowledge, have not been previously reported. This section is organized according to the value of the parameter $\alpha_2$: the linear case $\alpha_2 = 1$ is treated in subsection~\ref{sec-auxiliary-1}, the sublinear case $\alpha_2 \in (\frac12, 1)$ (corresponding to the so-called CKLS model) is discussed in subsection~\ref{sec-sublinear}, and the case $\alpha_2 = \frac12$ (corresponding to the Cox--Ingersoll--Ross (CIR) process) is examined in subsection~\ref{sec-CIR}.

Section~\ref{external} investigates the properties of the \emph{external} process $X$. In subsection~\ref{exanduni}, we establish the existence and uniqueness of the solution $(X,Y)$ to the system \eqref{eq:dX}--\eqref{eq:dY}. Subsection~\ref{sec-closetozero} focuses on the behavior of the process $(X,Y)$ near the origin $(0, 0)$, demonstrating, in particular, that it may remain close to this point with non-negligible probability.

To address this behavior, Section~\ref{Reflected-CKLS} introduces a reflected CKLS model, which prevents the trajectories of $Y$ from reaching zero while preserving the mean-reverting property.

The appendix provides supplementary results concerning the existence, uniqueness, and a comparison theorem for the equation \eqref{eq:dX} in a more general setting.

\section{The properties of the internal mean-reverting process depending on the power index of the diffusion coefficient}\label{sec-auxiliary}
In this section we consider the internal  process \eqref{eq:dY}.
Its properties depend on  the value of all coefficients, however, we  will conduct our study depending on the value of the exponent $\alpha_2$. And although all these cases are comparatively well studied, it is still useful to consider in detail the asymptotics of the corresponding processes. We study both   the behavior of their mean and variance, and then consider their trajectory behavior. Sometimes these behaviors  are very different, in the sense that  the behavior of the trajectories does not correspond to the behavior of the numerical characteristics. Note that from the empirical point of view, CKLS processes were carefully systemized in \cite{ckls1992} and then in \cite{borkovec98}, however, in some cases, both trajectory-wise analysis and even the moment's behavior sometimes needs more   calculations.

\subsection{The linear case }\label{sec-auxiliary-1}  
Let $\alpha_2 = 1$. Then properties of the process $Y$ are well known, see for example \cite[Sec.~5.3]{klebaner2012}. 
\begin{proposition}
\begin{itemize}
\item[$(i)$] There exists the unique strong solution of equation \eqref{eq:dY}, and this solution   has a form  
\[
Y_t = \exp\set{R_t} \left(Y_0 + a_2 \int_0^t \exp\set{-R_s}\,ds \right),
\]
where
\[
R_t = -b_2 t + \sigma_2 B_t - \frac{\sigma_2^2}{2}\, t, \quad t \ge 0. 
\] 
\item[$(ii)$] Process  $Y$ is  a.s. strictly positive.
\item[$(iii)$] The process $Y$ is ergodic with a stationary density  that corresponds to inverse gamma distribution and has the following form:
\begin{equation}\label{eq:invgamma}
p_{\infty}(x)=\left(\frac{\sigma_2^2}{2a_2}\right)^{-2b_2/\sigma_2^2-1} \!\left(\Gamma\left(\frac{2b_2}{\sigma_2^2}+1\right)\right)^{-1} \! x^{-2b_2/\sigma_2^2-2} \exp\set{-\frac{\sigma_2^2}{2a_2} x^{-1}},\,  x>0.
\end{equation}
\end{itemize}
\end{proposition}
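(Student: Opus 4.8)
The plan is to treat the three items in order, since each reduces to a standard but slightly different technique for the linear SDE \eqref{eq:dY} with $\alpha_2=1$, i.e.\ $dY_t = (a_2 - b_2 Y_t)\,dt + \sigma_2 Y_t\,dB_t$.

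For $(i)$, I would use the integrating-factor (variation-of-constants) method for linear SDEs. First I would consider the homogeneous equation $dZ_t = -b_2 Z_t\,dt + \sigma_2 Z_t\,dB_t$, whose solution is the geometric Brownian motion $Z_t = \exp\{R_t\}$ with $R_t = -b_2 t + \sigma_2 B_t - \tfrac{\sigma_2^2}{2}t$ by It\^o's formula. Then I would posit $Y_t = Z_t V_t$ and compute $d(Z_t V_t)$ via the It\^o product rule; matching with \eqref{eq:dY} forces $dV_t = a_2 Z_t^{-1}\,dt$ (the martingale parts and the cross-variation cancel precisely because the noise in $Y$ is purely multiplicative and the drift term $a_2$ is additive), giving $V_t = Y_0 + a_2\int_0^t \exp\{-R_s\}\,ds$ and hence the stated formula. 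Uniqueness follows because the coefficients of \eqref{eq:dY} are locally Lipschitz and of linear growth on $\real$, so the standard existence-and-uniqueness theorem for SDEs applies; alternatively one invokes the general result in the appendix. Item $(ii)$ is then immediate from the explicit formula: $\exp\{R_t\}>0$ always, $Y_0>0$ by hypothesis, and the integral term is nonnegative since $a_2\ge 0$, so $Y_t \ge Y_0\exp\{R_t\} > 0$ a.s.

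For $(iii)$, the route is the classical scale-function/speed-measure analysis for one-dimensional diffusions. I would write down the scale density $s(x) = \exp\{-2\int^x \frac{a_2-b_2 u}{\sigma_2^2 u^2}\,du\}$ and the speed density $m(x) = \frac{1}{\sigma_2^2 x^2 s(x)}$ on $(0,\infty)$, check Feller's boundary classification (that $0$ and $\infty$ are non-attracting, so the diffusion is recurrent and stays in $(0,\infty)$, consistent with $(ii)$), and verify that the speed measure is finite, $\int_0^\infty m(x)\,dx < \infty$, which requires $2b_2/\sigma_2^2 > 0$ — true by hypothesis. Finiteness of the speed measure gives positive recurrence, hence ergodicity, with stationary density proportional to $m(x)$. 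Carrying out the elementary integration of $\frac{a_2-b_2 u}{\sigma_2^2 u^2}$ yields $m(x) \propto x^{-2b_2/\sigma_2^2 - 2}\exp\{-\frac{2a_2}{\sigma_2^2}\cdot\frac1x\}$, which one recognizes as the kernel of an inverse gamma law; normalizing via the Gamma integral produces exactly \eqref{eq:invgamma}. (Equivalently, one can substitute $U_t = 1/Y_t$, observe by It\^o that $U$ solves a CIR-type equation, and read off its gamma stationary law.)

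The main obstacle is $(iii)$: not the ideas, which are textbook, but getting the normalizing constant in \eqref{eq:invgamma} exactly right — keeping track of the factor $2$ from It\^o's formula, correctly matching the shape and rate parameters of the inverse gamma distribution, and confirming the integrability conditions at both boundaries. Items $(i)$ and $(ii)$ are essentially immediate once the integrating factor is identified.
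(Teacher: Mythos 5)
Your proposal is correct in substance, but be aware that the paper does not actually prove this proposition: it records the statements as well known and refers to Klebaner, Sec.~5.3. What your argument buys is a self-contained derivation using exactly the toolkit the paper deploys elsewhere: variation of constants with the geometric Brownian motion integrating factor $\exp\set{R_t}$ for $(i)$ (this is the same explicit formula the paper states), positivity read directly off that formula for $(ii)$, and the scale-function/speed-measure analysis for $(iii)$, which is the same machinery the paper uses in its trajectory-wise lemmas (via Ikeda--Watanabe) and implicitly when it cites Borkovec--Kl\"uppelberg for the stationary moments. So your route is not divergent from the paper's, it simply fills in what the paper delegates to the literature.

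Two caveats on item $(iii)$. First, your integrability bookkeeping is slightly misplaced: integrability of the speed density $m(x)\propto x^{-2b_2/\sigma_2^2-2}\exp\set{-\tfrac{2a_2}{\sigma_2^2 x}}$ at $+\infty$ is what the condition $b_2>0$ gives you, whereas integrability at $0$ comes from the exponential factor and therefore requires $a_2>0$; the condition ``$2b_2/\sigma_2^2>0$'' alone does not make $\int_0 m(x)\,dx$ finite, and indeed for $a_2=0$ there is no stationary density (consistently with the paper's Remark that then $Y_t\to 0$ a.s.). So item $(iii)$ needs the standing hypothesis $a_2>0$ made explicit. Second, your computation yields the inverse gamma law with shape $\tfrac{2b_2}{\sigma_2^2}+1$ and rate $\tfrac{2a_2}{\sigma_2^2}$, i.e.\ exponential factor $\exp\set{-\tfrac{2a_2}{\sigma_2^2}x^{-1}}$; the displayed formula \eqref{eq:invgamma} has $\exp\set{-\tfrac{\sigma_2^2}{2a_2}x^{-1}}$, which is inconsistent with its own normalizing constant $\bigl(\tfrac{\sigma_2^2}{2a_2}\bigr)^{-2b_2/\sigma_2^2-1}=\bigl(\tfrac{2a_2}{\sigma_2^2}\bigr)^{2b_2/\sigma_2^2+1}$ (and with the stationary mean $a_2/b_2$, which your parametrization reproduces as $\beta/(\alpha-1)$). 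Your derivation is the correct one; just do not claim it matches \eqref{eq:invgamma} ``exactly'' --- it matches the standard form and flags what is evidently a misprint in the exponent of \eqref{eq:invgamma}.
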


Now, let us compute the mean and variance of $Y$ and study their asymptotic behavior.
 \begin{lemma}
Let $\alpha_2 = 1$. Then
\begin{gather}
\label{eq:EY}
\ex Y_t  = \left(Y_0 - \frac{a_2}{b_2}\right) e^{-b_2 t} + \frac{a_2}{b_2},
\\
\label{eq:EY^2}
\ex Y_t^2 =  
\begin{cases}
Y_0^2 e^{-(2b_2 - \sigma_2^2) t} + \frac{2a_2}{b_2 - \sigma_2^2} (Y_0 - \frac{a_2}{b_2}) e^{-b_2 t} (1 - e^{-(b_2 - \sigma_2^2) t}) 
\\[2pt]
\qquad {} + \frac{2a_2^2}{b_2 (2b_2 - \sigma_2^2)} \bigl(1 - e^{-(2b_2 - \sigma_2^2) t}\bigr),
& \sigma^2 \ne b_2,\, \sigma^2 \ne 2b_2,
\\[5pt]
Y_0^2 e^{- b_2 t} + 2a_2 (Y_0 - \frac{a_2}{b_2}) t e^{-b_2 t}
+ \frac{2a_2^2}{b_2^2} (1 - e^{-b_2 t}),
& \sigma^2 = b_2,
\\[5pt]
Y_0^2 +
\frac{2a_2}{b_2} (Y_0 - \frac{a_2}{b_2}) (1 - e^{-b_2 t}) + \frac{2a_2^2}{b_2} t,
& \sigma^2 = 2b_2.
\end{cases}
\end{gather}
 \end{lemma}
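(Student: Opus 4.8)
The plan is to compute $\ex Y_t$ directly from the SDE \eqref{eq:dY}, then set up and solve a linear ODE for the second moment. First I would take expectations in the integrated form of \eqref{eq:dY}. Since $\alpha_2 = 1$, the diffusion coefficient is $\sigma_2 Y_t$, and using that $Y$ is a.s.\ strictly positive with, say, locally bounded moments (which one checks via a standard localization/Gr\"onwall argument, or simply cites the known theory for linear SDEs), the stochastic integral $\int_0^t \sigma_2 Y_s\,dB_s$ is a true martingale with zero mean. This yields $\ex Y_t = Y_0 + \int_0^t (a_2 - b_2 \ex Y_s)\,ds$, i.e.\ $\frac{d}{dt}\ex Y_t = a_2 - b_2 \ex Y_t$, a first-order linear ODE whose solution with initial condition $Y_0$ is exactly \eqref{eq:EY}.

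Next I would handle the second moment. Applying It\^o's formula to $Y_t^2$ gives
\[
dY_t^2 = 2Y_t\,dY_t + d\langle Y\rangle_t = \bigl(2a_2 Y_t - 2b_2 Y_t^2 + \sigma_2^2 Y_t^2\bigr)\,dt + 2\sigma_2 Y_t^2\,dB_t.
\]
Taking expectations (again the martingale part vanishes, given the moment bounds), and writing $m(t) = \ex Y_t^2$, we get
\[
m'(t) = -(2b_2 - \sigma_2^2)\, m(t) + 2a_2\, \ex Y_t,
\]
where $\ex Y_t$ is the already-known function from \eqref{eq:EY}. This is a linear inhomogeneous ODE with constant coefficient and a forcing term that is a linear combination of $e^{-b_2 t}$ and a constant. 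Solving by the integrating factor $e^{(2b_2 - \sigma_2^2)t}$ and integrating term by term produces the three cases: the generic case $\sigma_2^2 \ne b_2, 2b_2$, and the two resonance cases $\sigma_2^2 = b_2$ (where the $e^{-b_2 t}$ term in the forcing resonates with... no — rather where $2b_2 - \sigma_2^2 = b_2$, so the forcing frequency $e^{-b_2 t}$ coincides with the homogeneous solution $e^{-(2b_2-\sigma_2^2)t} = e^{-b_2 t}$, producing a $t\,e^{-b_2 t}$ term) and $\sigma_2^2 = 2b_2$ (where $2b_2 - \sigma_2^2 = 0$, so the homogeneous solution is constant and the constant part of the forcing produces a linear-in-$t$ term).

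The only genuine subtlety — and the step I would be most careful about — is justifying the interchange of expectation and the stochastic integral, i.e.\ that $\int_0^t Y_s\,dB_s$ and $\int_0^t Y_s^2\,dB_s$ are true martingales rather than merely local martingales; this requires knowing that $\sup_{s\le t}\ex Y_s^2 < \infty$ (and $\sup_{s\le t}\ex Y_s^4 < \infty$ for the second computation). One clean way is to first derive a priori moment bounds: localize with stopping times $\tau_n = \inf\{t : Y_t \ge n\}$, obtain the ODE inequalities $\frac{d}{dt}\ex Y_{t\wedge\tau_n}^{2k} \le C_k(1 + \ex Y_{t\wedge\tau_n}^{2k})$, apply Gr\"onwall, and let $n \to \infty$ by Fatou; alternatively, since Proposition~2.2$(i)$ gives the explicit representation $Y_t = e^{R_t}(Y_0 + a_2\int_0^t e^{-R_s}\,ds)$ with $R$ a Brownian motion with drift, one can read off finiteness of all moments from properties of the exponential of a Gaussian process. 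Once these bounds are in hand, everything else is routine ODE bookkeeping. The algebra in the three-case formula \eqref{eq:EY^2} should then just be verified by checking it satisfies the ODE with the correct initial value $m(0) = Y_0^2$ in each case.
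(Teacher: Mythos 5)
Your proposal follows essentially the same route as the paper: take expectations in the integrated SDE to obtain the linear ODE for $\ex Y_t$, apply It\^o's formula to $Y_t^2$, and solve the resulting linear inhomogeneous ODE $m'(t) = (\sigma_2^2 - 2b_2)m(t) + 2a_2\,\ex Y_t$ with the three cases arising exactly as you describe. Your extra care about justifying that the stochastic integrals are true martingales (via localization/Gr\"onwall or the explicit exponential representation) is a sound supplement to a step the paper treats implicitly, but it does not change the argument.
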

\begin{proof}
Taking the expectation on both sides of \eqref{eq:dY}, we obtain the following integral equation for $\ex Y_t$:
\begin{equation}\label{eq:EY-equation}
\ex Y_t = Y_0 + \int_0^t (a_2 - b_2 \ex Y_s)\,ds.
\end{equation}
Solving this equation yields \eqref{eq:EY}.

Furthermore, by the It\^o lemma,
\[
Y_t^2 = Y_0^2 + \int_0^t \left(2a_2 Y_s + \left(\sigma_2^2-2b_2\right)Y_s^2\right)ds + 2\sigma_2 \int_0^t Y_s^2 dB_s.
\]
Hence,
\[
\ex Y_t^2 = Y_0^2 + \int_0^t \left(2a_2 \ex Y_s + \left(\sigma_2^2-2b_2\right)\ex Y_s^2\right)ds.
\]
Denoting $f(t) = \ex Y_t^2$ and taking into account \eqref{eq:EY},
we arrive at the following differential equation
\[
f'(t)= ( \sigma_2^2-2b_2) f(t) + 2a_2 \left(Y_0 - \frac{a_2}{b_2}\right) e^{-b_2 t} + \frac{2a_2^2}{b_2},
\qquad f(0) = Y_0^2.
\]
Solving this  differential equation we arrive at \eqref{eq:EY^2}.
\end{proof}

\begin{remark}\label{rem:EY-monot}
Let us study the monotonicity of $\ex Y_t$.
From \eqref{eq:EY}, it follows  that
\begin{itemize}
    \item if $Y_0 > \frac{a_2}{b_2}$, then $\ex Y_t$ decreases from $Y_0$ to $\frac{a_2}{b_2}$ as $t$ increases from 0 to infinity,
    \item if $Y_0 < \frac{a_2}{b_2}$, then $\ex Y_t$ increases from $Y_0$ to $\frac{a_2}{b_2}$ as $t$ increases from 0 to infinity.
\end{itemize}
In particular, we will make use of the fact that $\ex Y_t$ remains bounded away from zero.
\end{remark}

Taking the limit as $t \to \infty$ in \eqref{eq:EY}--\eqref{eq:EY^2}, we obtain the asymptotic values of $\ex Y_t$ and $\ex Y_t^2$, and consequently, of the variance $\Var Y_t$. Note that these asymptotic values can be deduced by computing the moments of stationary distribution \eqref{eq:invgamma}, see \cite[Example~4.3]{borkovec98}.
\begin{corollary}
Let $\alpha_2 = 1$. Then
\[
\ex Y_t \to \frac{a_2}{b_2}, \quad t\to\infty.
\]
Moreover,
\begin{enumerate}[(i)]
\item
if $a_2>0$, then
\[
\Var Y_t \to 
\begin{cases}
\dfrac{a_2^2\sigma_2^2}{b_2^2(2b_2 - \sigma_2^2)}, & \text{if } \sigma_2^2 < 2b_2,\\
\infty,  & \text{if }  \sigma_2^2 \ge 2b_2,
\end{cases}
\quad t\to\infty,
\]
\item
if $a_2=0$, then
\[
\Var Y_t \to 
\begin{cases}
0, & \text{if } \sigma_2^2 < 2b_2,\\
Y_0^2, & \text{if } \sigma_2^2 = 2b_2,\\
\infty,  & \text{if } \sigma_2^2 > 2b_2,
\end{cases}
\quad t\to\infty.
\]
\end{enumerate}
\end{corollary}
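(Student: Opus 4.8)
The plan is to pass to the limit $t \to \infty$ in the explicit formulas \eqref{eq:EY} and \eqref{eq:EY^2} from the preceding lemma, and then to read off the behavior of $\Var Y_t = \ex Y_t^2 - (\ex Y_t)^2$. Since $b_2 > 0$, formula \eqref{eq:EY} immediately gives $\ex Y_t \to a_2/b_2$; in particular $(\ex Y_t)^2$ stays bounded, so the asymptotics of $\Var Y_t$ are governed entirely by those of $\ex Y_t^2$.

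For the variance I would split according to the sign of $2b_2 - \sigma_2^2$, matching the branches of \eqref{eq:EY^2}. When $\sigma_2^2 < 2b_2$, every exponential appearing in \eqref{eq:EY^2} has a negative exponent: this is clear for $e^{-(2b_2-\sigma_2^2)t}$ and $e^{-b_2 t}$, while the product $e^{-b_2 t}\bigl(1 - e^{-(b_2-\sigma_2^2)t}\bigr)$ equals $e^{-b_2 t} - e^{(\sigma_2^2 - 2b_2)t}$, again a combination of decaying exponentials, and the term $t\,e^{-b_2 t}$ occurring in the boundary case $\sigma_2^2 = b_2$ also tends to $0$. Hence $\ex Y_t^2 \to \frac{2a_2^2}{b_2(2b_2-\sigma_2^2)}$, and subtracting $(a_2/b_2)^2$ gives, after a short simplification, $\Var Y_t \to \frac{a_2^2\sigma_2^2}{b_2^2(2b_2-\sigma_2^2)}$; when $a_2 = 0$ this limit is $0$, which covers case $(ii)$ in that regime. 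In the case $\sigma_2^2 = 2b_2$ I use the third branch of \eqref{eq:EY^2}: if $a_2 > 0$ the term $\frac{2a_2^2}{b_2}\,t$ forces $\ex Y_t^2 \to \infty$, hence $\Var Y_t \to \infty$; if $a_2 = 0$ the branch collapses to $\ex Y_t^2 \equiv Y_0^2$ while $\ex Y_t = Y_0 e^{-b_2 t} \to 0$, so $\Var Y_t \to Y_0^2$.

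The remaining regime $\sigma_2^2 > 2b_2$ is the only delicate one. Here the first branch of \eqref{eq:EY^2} applies (note $\sigma_2^2 > 2b_2 > b_2$, so the excluded equalities cannot hold), and rewriting it as a linear combination of $e^{(\sigma_2^2-2b_2)t}$, $e^{-b_2 t}$ and $1$ shows that $\ex Y_t^2 = C\,e^{(\sigma_2^2-2b_2)t} + \frac{2a_2^2}{b_2(2b_2-\sigma_2^2)} + o(1)$ for an explicit constant $C$, the dominant exponent being the positive number $\sigma_2^2 - 2b_2$. When $a_2 = 0$ one has $C = Y_0^2 > 0$ outright, so $\ex Y_t^2 \to \infty$ and, since $Y_0^2 e^{-2b_2 t} \to 0$, also $\Var Y_t \to \infty$. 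When $a_2 > 0$, instead of computing $C$ I would invoke the nonnegativity of $\ex Y_t^2$: if $C < 0$ then $\ex Y_t^2 \to -\infty$, and if $C = 0$ then $\ex Y_t^2 \to \frac{2a_2^2}{b_2(2b_2-\sigma_2^2)} < 0$, both of which are impossible; hence $C > 0$, $\ex Y_t^2 \to \infty$, and since $(\ex Y_t)^2$ is bounded, $\Var Y_t \to \infty$.

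I expect this positivity-of-the-leading-coefficient point to be the only genuine obstacle; the rest is bookkeeping over the three branches of \eqref{eq:EY^2} and the sub-cases determined by the sign of $b_2 - \sigma_2^2$ within the first branch.
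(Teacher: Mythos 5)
Your proposal is correct and follows essentially the same route as the paper, which obtains the corollary simply by letting $t\to\infty$ in the explicit formulas \eqref{eq:EY}--\eqref{eq:EY^2}. Your nonnegativity argument for the sign of the leading coefficient of $e^{(\sigma_2^2-2b_2)t}$ in the regime $\sigma_2^2>2b_2$, $a_2>0$ is a neat way to avoid the explicit computation that the paper leaves implicit, but it does not change the nature of the argument.
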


\begin{remark}
Of course, the simplest case is when $a_2 = 0$, and $Y_t = Y_0 \exp\set{R_t}$.
In this case for any $b_2 \ge 0$, $R_t \to -\infty$ and $Y_t \to 0$ a.s., as $t \to \infty$, while $\ex Y_t = Y_0 \exp\set{-b_2 t}$ and tends to 0 if $b_2 > 0$ and equals 1 if $b_2 = 0$.
Moreover, variance
$\Var Y_t = Y_0^2 \exp\set{-2 b_2 t}(\exp\{\sigma^2 t\} - 1)$
and can tend to 0 or to $\infty$ depending on whether $2b_2 > \sigma^2$ or $2b_2 < \sigma^2$. In the case $2b_2 = \sigma^2$ $\Var Y_t \to Y_0^2$, $t\to\infty$.
We wish to emphasize here that the case $Y_t \to 0$ a.s.\ while $\Var Y_t \to \infty$ is possible.
\end{remark}

In this connection, let us consider trajectory-wise asymptotic behavior of $Y_t$.

\begin{lemma}
\begin{enumerate}[(i)]
\item Let $a_2 > 0$. Then
\[
\limsup_{t \to \infty} Y_t = +\infty \text{ a.s.\ and }
\liminf_{t \to\infty} Y_t = 0 \text{ a.s.,}
\]
$Y$ is a recurrent process.

\item Let $a_2 = 0$. Then $Y_t \to 0$ a.s.\ as $t \to +\infty$.
\end{enumerate}
\end{lemma}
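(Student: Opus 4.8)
The plan is to treat the two cases by different means: part~$(ii)$ is immediate from the explicit solution, while part~$(i)$ is cleanest via the boundary classification of one-dimensional diffusions. For $(ii)$, with $a_2 = 0$ the representation in the Proposition above collapses to $Y_t = Y_0 \exp\set{R_t}$ with $R_t = -\bigl(b_2 + \tfrac{\sigma_2^2}{2}\bigr)t + \sigma_2 B_t$. Since $B_t/t \to 0$ a.s.\ by the strong law of large numbers for Brownian motion and $b_2 > 0$, we get $R_t/t \to -\bigl(b_2 + \tfrac{\sigma_2^2}{2}\bigr) < 0$, hence $R_t \to -\infty$ and $Y_t \to 0$ a.s.; this is essentially the computation recorded in the Remark preceding the statement.

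For $(i)$, I would compute the scale density of the diffusion $Y$ on $(0,\infty)$: up to a positive multiplicative constant, $s'(x) = \exp\set{-\int_c^x \frac{2(a_2 - b_2 u)}{\sigma_2^2 u^2}\,du} = x^{2b_2/\sigma_2^2}\exp\set{\frac{2a_2}{\sigma_2^2 x}}$, and then inspect the two endpoints. At $+\infty$: since $b_2 > 0$, the polynomial factor $x^{2b_2/\sigma_2^2}$ forces $\int^{\infty} s'(x)\,dx = +\infty$, so $s(+\infty) = +\infty$. At $0+$: since $a_2 > 0$, the factor $\exp\set{2a_2/(\sigma_2^2 x)}$ blows up as $x \downarrow 0$ and $\int_{0+} s'(x)\,dx = +\infty$, so $s(0+) = -\infty$. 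Hence both boundaries are inaccessible and $Y$ is recurrent on $(0,\infty)$ — indeed positive recurrent, consistently with the ergodicity stated in part~$(iii)$ of the Proposition, the stationary law being a genuine probability measure. By recurrence, for every $y \in (0,\infty)$ the process started at $Y_0 > 0$ a.s.\ reaches level $y$ and, by the strong Markov property, returns to $y$ along a sequence of times increasing to $+\infty$; therefore $\limsup_{t\to\infty} Y_t \ge y$ and $\liminf_{t\to\infty} Y_t \le y$ a.s. Letting $y \to +\infty$ and $y \downarrow 0$ yields $\limsup_{t\to\infty} Y_t = +\infty$ and $\liminf_{t\to\infty} Y_t = 0$ a.s. As an alternative one can bypass the scale computation and argue from ergodicity: the inverse gamma stationary distribution has full support $(0,\infty)$, so the ergodic theorem forces $Y$ to spend a positive asymptotic fraction of time in every interval $(0,\varepsilon)$ and in every interval $(M,\infty)$, which gives the same two identities.

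I do not expect a serious obstacle. The two points that deserve a word are that the endpoints $0$ and $+\infty$ are genuinely inaccessible — this follows from the divergence of the scale integrals together with the at most linear growth of the coefficients, which rules out explosion, and is anyway compatible with part~$(ii)$ of the Proposition — and the passage from ``$Y$ is a recurrent diffusion'' to the precise $\limsup$/$\liminf$ identities, for which one invokes the standard recurrence criterion for one-dimensional diffusions. The only real decision is presentational: whether to carry out the short scale-and-speed computation or simply to lean on the ergodicity already established in part~$(iii)$ of the Proposition.
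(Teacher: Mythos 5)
Your proposal is correct, and for part $(i)$ it coincides with the paper's argument: the authors also compute the scale function $s(x)=\int_c^x \exp\{-\int_c^y 2(a_2-b_2z)/(\sigma_2^2 z^2)\,dz\}\,dy$, observe that $s(0+)=-\infty$ precisely when $a_2>0$ while $s(+\infty)=+\infty$ always, and then invoke Ikeda--Watanabe, Chapter VI, Theorem 3.1, whose item (1) directly delivers the recurrence together with the identities $\limsup_t Y_t=+\infty$ and $\liminf_t Y_t=0$; your extra step of re-deriving these identities from hitting every level via the strong Markov property is fine but not needed once that theorem is cited. The genuine difference is part $(ii)$: the paper stays with the same tool and concludes $Y_t\to 0$ from item (2) of the same theorem, using that $s(0+)$ is finite and $s(+\infty)=+\infty$ when $a_2=0$, whereas you use the explicit representation $Y_t=Y_0\exp\{R_t\}$ with $R_t=-(b_2+\sigma_2^2/2)t+\sigma_2 B_t$ and the strong law for Brownian motion. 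Your route for $(ii)$ is more elementary and is exactly the observation the paper records in the Remark preceding the lemma (though there it is not used as the proof); the paper's route has the advantage of treating both cases uniformly with one boundary-classification computation. Your alternative argument via ergodicity and the full support of the inverse gamma stationary law is also sound for $(i)$, but it is a different mechanism from the one the paper uses and requires appealing to the occupation-time ergodic theorem rather than the scale-function criterion.
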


\begin{proof}
We follow the approach proposed in \cite[Chapter VI, Section 3, p.~446]{ikeda-watanabe}.
Consider the set $I = (0, +\infty)$, i.e., $I = (l,r)$, where $l = 0$, $r = +\infty$.
If $c > 0$, and we consider an SDE
\[
dX_t = b(X_t) dt+\sigma(X_t) dB_t,
\quad X_0 = x_0,\quad t \ge 0,
\]
then create a scale function
\[
s(x) = \int_c^x \exp\set{-\int_c^y \frac{2b(z)}{\sigma^2(z)}\,dz}\,dy.
\]
In our case $b(z) = a_2 - b_2 z$, $\sigma(z) = \sigma_2 z$, therefore
\[
s(x) = c^{-\frac{2b_2}{\sigma_2^2}} \exp\set{-\frac{2a_2}{\sigma_2^2c}}
\int_c^x \exp\set{\frac{2a_2}{\sigma_2^2 y}} y^{\frac{2b_2}{\sigma_2^2}}\,dy.
\]

Now we need to calculate
$s(0)\coloneqq \lim_{x \to 0+} s(x)$
and $s(+\infty)\coloneqq \lim_{x \to +\infty} s(x)$.
Obviously,
\[
\int_c^{0+} \exp\set{\frac{2a_2}{\sigma_2^2 y}} y^{\frac{2b_2}{\sigma_2^2}}\,dy = -\infty,
\]
and so $s(0) = -\infty$, if $a_2 > 0$.
If $a_2 = 0$, then $s(0) = 0$.
Furthermore, $s(+\infty) = +\infty$ for all $a_2 \ge 0$, $b_2 \ge 0$.
According to \cite[Chapter VI, Theorem 3.1]{ikeda-watanabe}, if $s(0) = -\infty$ and $s(+\infty) = +\infty$, then
\[
\prob_{x_0}\left(\limsup_{t\to\infty} Y_t = +\infty\right)
= \prob_{x_0}\left(\liminf_{t\to\infty} Y_t = 0\right) = 1,
\]
for any $x_0 > 0$, and the process $Y$ is recurrent.
Therefore, if $a_2 > 0$, then $Y_t$ is recurrent and oscillates between 0 and $+\infty$.

Furthermore, if $s(0)\in \real$ and $s(+\infty) = +\infty$, then
\[
\prob_{x_0}\left(\lim_{t\to\infty} Y_t = 0\right) = 1,
\]
i.e., $Y_t \to 0$ a.s.\ as $t \to \infty$.
Lemma is proved.
\end{proof}

\subsection{Sublinear case, Chan--Karolyi--Longstaff--Sanders (CKLS) process}
\label{sec-sublinear}
Let\linebreak $\alpha_2 \in (\frac12,1)$.
Then $Y$ is a.s.\ strictly positive, ergodic and has a stationary density. More precisely, 
the existence and uniqueness of a solution from Yamada--Watanabe theorem \cite[Prop.~2.13, p.~ 291]{karatzas-shreve}, the strict positivity follows from Feller's test for explosions \cite[Thm.~5.29, p.~348]{karatzas-shreve}, ergodicity is an application of the ergodic theory for homogeneous diffusions \cite[Ch.~1, \S\,3]{skorokhod89}. The next result was formulated in \cite{MRD22}, see also \cite{AndPit07}.

\begin{proposition}\label{prop:CKLS}
Let $\alpha_2 \in (\frac12,1)$.
\begin{enumerate}
\item \label{st1}
The equation \eqref{eq:dY} has a unique strong solution $Y=\{Y_t,t\ge0\}$.
\item \label{st2}
The process $Y$ is a.s.\ strictly positive.
\item \label{st3}
The process $Y$ is an ergodic diffusion with the following stationary density:
\begin{equation*}
p_{\infty}(x) =  G \cdot x^{-2\alpha_2} \exp\set{\frac{2}{\sigma_2^{2}}\left(\frac{a_2 \cdot x^{1-2\alpha_2}}{1-2\alpha_2} -
\frac{b_2 \cdot x^{2-2\alpha_2}}{2-2\alpha_2} \right)}, \quad x>0.
\end{equation*}
where
\[
G =  \left(\int_{0}^\infty y^{-2\alpha_2} \exp{\frac{2}{\sigma_2^{2}}\left(\frac{a_2 \cdot y^{1-2\alpha_2}}{1-2\alpha_2} - \frac{b_2 \cdot y^{2-2\alpha_2}}{2-2\alpha_2} \right)} dy \right)^{-1}.
\]
\end{enumerate}
\end{proposition}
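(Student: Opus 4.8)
The proposition assembles three classical facts about one-dimensional homogeneous diffusions, and I would prove it by checking, one at a time, the hypotheses of the Yamada--Watanabe theorem, of Feller's test, and of the standard ergodic theory. Throughout write the coefficients of \eqref{eq:dY} as $b(y)=a_2-b_2 y$ and $\sigma(y)=\sigma_2 y^{\alpha_2}$, extended to $\real$ by $\sigma(y)=\sigma_2\abs{y}^{\alpha_2}$. The drift $b$ is globally Lipschitz of linear growth and $\sigma$ is continuous of sublinear growth, so a weak solution exists and does not explode. Pathwise uniqueness follows from \cite[Prop.~2.13, p.~291]{karatzas-shreve}: since $\alpha_2\ge\frac12$, on every bounded set $\abs{\abs{x}^{\alpha_2}-\abs{y}^{\alpha_2}}\le\abs{x-y}^{\alpha_2}\le C\abs{x-y}^{1/2}$, so one may take the Yamada--Watanabe modulus $\rho(u)=c\,u^{1/2}$, for which $\int_{0+}\rho^{-2}(u)\,du=+\infty$; this is exactly where $\alpha_2\ge\frac12$ is used. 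Weak existence together with pathwise uniqueness gives the unique strong solution, which by the next step never leaves $(0,\infty)$, so the choice of extension of $\sigma$ is immaterial.

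For strict positivity I would argue as in the linear case of \S\ref{sec-auxiliary-1}, applying Feller's test \cite[Thm.~5.29, p.~348]{karatzas-shreve} on $I=(0,\infty)$. Fixing $c>0$, the scale density is
\[
s'(x)=\exp\set{-\int_c^x\frac{2(a_2-b_2 z)}{\sigma_2^2 z^{2\alpha_2}}\,dz}
=C\exp\set{-\frac{2}{\sigma_2^2}\left(\frac{a_2 x^{1-2\alpha_2}}{1-2\alpha_2}-\frac{b_2 x^{2-2\alpha_2}}{2-2\alpha_2}\right)}.
\]
Since $2-2\alpha_2\in(0,1)$, as $x\to+\infty$ the exponent of $s'$ is governed by $\frac{2b_2}{\sigma_2^2(2-2\alpha_2)}x^{2-2\alpha_2}\to+\infty$, so $s'(x)\to+\infty$ and $s(+\infty)=+\infty$; since $1-2\alpha_2\in(-1,0)$ and $a_2>0$, as $x\to0+$ the exponent is governed by $-\frac{2a_2}{\sigma_2^2(1-2\alpha_2)}x^{1-2\alpha_2}\to+\infty$, so $s'(x)\to+\infty$ so fast that $\int_{0+}s'(x)\,dx=+\infty$, i.e.\ $s(0+)=-\infty$. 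Hence both endpoints are inaccessible, the solution stays in $(0,\infty)$ for all $t$ a.s., and (as $s(0+)=-\infty$, $s(+\infty)=+\infty$) it is recurrent. Exactly as in \S\ref{sec-auxiliary-1}, this uses $a_2>0$: for $a_2=0$ the origin becomes accessible, so the positivity and ergodicity statements are to be read for $a_2>0$.

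For the stationary density I would invoke the ergodic theory of recurrent one-dimensional diffusions \cite[Ch.~1, \S\,3]{skorokhod89}: such a process is positive recurrent --- hence ergodic with invariant density proportional to the speed density --- precisely when its speed measure is finite. The speed density is
\[
m'(x)\propto\frac{1}{\sigma_2^2 x^{2\alpha_2}\,s'(x)}\propto x^{-2\alpha_2}\exp\set{\frac{2}{\sigma_2^2}\left(\frac{a_2 x^{1-2\alpha_2}}{1-2\alpha_2}-\frac{b_2 x^{2-2\alpha_2}}{2-2\alpha_2}\right)},
\]
which is exactly the claimed $p_\infty$ up to a multiplicative constant, so it remains only to check $\int_0^\infty m'(x)\,dx<\infty$. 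Near $+\infty$ the factor $\exp\set{-\frac{2b_2}{\sigma_2^2(2-2\alpha_2)}x^{2-2\alpha_2}}$ decays faster than any power, so integrability there is automatic; near $0$, because $a_2>0$, the factor $\exp\set{\frac{2a_2}{\sigma_2^2(1-2\alpha_2)}x^{1-2\alpha_2}}$ has exponent tending to $-\infty$ and dominates the singularity $x^{-2\alpha_2}$, so the integral converges there as well. Therefore $G$ is finite, $p_\infty=G\,m'$ is the invariant probability density, and by ergodicity $\frac1T\int_0^T g(Y_t)\,dt\to\int_0^\infty g(x)p_\infty(x)\,dx$ a.s.\ for every $p_\infty$-integrable $g$.

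The two antiderivatives and the attendant asymptotics are routine. The one delicate point --- and what I regard as the main obstacle --- is the boundary behavior at the origin: the key equality $s(0+)=-\infty$, which simultaneously yields inaccessibility of $0$ and finiteness of the invariant measure near $0$, relies on $a_2>0$, precisely the mechanism that already surfaced for $\alpha_2=1$ in \S\ref{sec-auxiliary-1}.
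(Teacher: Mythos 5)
Your proposal is correct and takes essentially the same route the paper indicates: the paper gives no detailed proof, only citing Yamada--Watanabe \cite[Prop.~2.13]{karatzas-shreve} for existence and uniqueness, Feller's test \cite[Thm.~5.29]{karatzas-shreve} for strict positivity, and the ergodic theory of one-dimensional diffusions \cite{skorokhod89} (with the statement taken from \cite{MRD22}), and you simply carry out the corresponding scale/speed computations explicitly and correctly. Your caveat that parts (2)--(3) require $a_2>0$ is a fair reading of the standing assumptions and matches the cited sources.
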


Now let us study the asymptotic behavior of mean, together with the uniform in $t$  boundedness of  variance (Lemma \ref{lem-2-2}), and trajectory-wise asymptotic behavior (Lemma \ref{lem-2-3}) of CKLS process.

\begin{lemma}\label{lem-2-2}
Let $\alpha_2 \in (\frac12, 1)$.
Then 
\begin{equation}\label{eq:EY-a}
\ex Y_t \to \frac{a_2}{b_2}, \quad \text{as } t \to \infty,
\end{equation}
and there exists $C > 0$ such that
\[
\sup_{t > 0} \ex Y_t^2 \le C.
\]
\end{lemma}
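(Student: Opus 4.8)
The plan is to prove the uniform second-moment bound first, and then derive the convergence of the mean as an easy consequence.

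\textbf{Step 1 (uniform bound on $\ex Y_t^2$).} By Proposition~\ref{prop:CKLS}, equation~\eqref{eq:dY} has a unique strong solution which is strictly positive and defined (hence non-explosive) on all of $[0,\infty)$, so the stopping times $\tau_n=\inf\{t\ge0:Y_t\ge n\}$ increase to $+\infty$ a.s. I would apply Itô's formula to $e^{b_2 t}Y_t^2$, obtaining
\[
d\bigl(e^{b_2 t}Y_t^2\bigr)=e^{b_2 t}\bigl(-b_2 Y_t^2+2a_2 Y_t+\sigma_2^2 Y_t^{2\alpha_2}\bigr)\,dt+2\sigma_2 e^{b_2 t}Y_t^{\alpha_2+1}\,dB_t .
\]
Since $2\alpha_2<2$, Young's inequality gives $2a_2 y+\sigma_2^2 y^{2\alpha_2}\le b_2 y^2+K$ for all $y\ge0$ with a constant $K=K(a_2,b_2,\sigma_2,\alpha_2)$, so the drift bracket is at most $K$. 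Stopping at $\tau_n$ removes the (now bounded) stochastic integral, and taking expectations yields
\[
\ex\bigl[e^{b_2(t\wedge\tau_n)}Y_{t\wedge\tau_n}^2\bigr]\le Y_0^2+\frac{K}{b_2}\bigl(e^{b_2 t}-1\bigr).
\]
Restricting the left-hand expectation to the event $\{\tau_n>t\}$, on which $t\wedge\tau_n=t$ and $Y_{t\wedge\tau_n}=Y_t$, gives $\ex\bigl[Y_t^2\,\ind_{\{\tau_n>t\}}\bigr]\le e^{-b_2 t}Y_0^2+K/b_2$; letting $n\to\infty$ and using monotone convergence produces $\ex Y_t^2\le Y_0^2+K/b_2$ for every $t$, i.e. $\sup_{t>0}\ex Y_t^2\le C$.

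\textbf{Step 2 (convergence of $\ex Y_t$).} The bound from Step~1 implies $\ex\int_0^t Y_s^{2\alpha_2}\,ds\le\int_0^t\bigl(1+\ex Y_s^2\bigr)\,ds<\infty$, so the martingale part $\int_0^t\sigma_2 Y_s^{\alpha_2}\,dB_s$ is a true martingale of zero mean. Taking expectations in~\eqref{eq:dY} then gives the integral equation $\ex Y_t=Y_0+\int_0^t(a_2-b_2\ex Y_s)\,ds$, exactly as in~\eqref{eq:EY-equation}. By the Cauchy--Schwarz inequality and Step~1 the function $m(t):=\ex Y_t$ is bounded, hence the integrand above is bounded, $m$ is Lipschitz and absolutely continuous, and $m'(t)=a_2-b_2 m(t)$, $m(0)=Y_0$. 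Solving this linear ODE gives $m(t)=a_2/b_2+(Y_0-a_2/b_2)e^{-b_2 t}\to a_2/b_2$ as $t\to\infty$, which is~\eqref{eq:EY-a}.

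\textbf{Main obstacle.} The delicate point is Step~1: one cannot directly take expectations in the Itô expansion of $Y_t^2$, since $\ex Y_t^2$ is not a priori finite and the local martingale need not be a genuine martingale. The localization by $\tau_n$, the use of non-explosivity from Proposition~\ref{prop:CKLS} to guarantee $\tau_n\uparrow\infty$ a.s., and the passage to the limit are what make the estimate rigorous. The hypothesis $\alpha_2<1$ enters precisely here: it makes the diffusion contribution $\sigma_2^2 y^{2\alpha_2}$ strictly subquadratic, so it is absorbed by the mean-reversion term $-2b_2 y^2$; for $\alpha_2=1$ this absorption fails when $\sigma_2^2\ge 2b_2$, in agreement with the earlier corollary where $\Var Y_t\to\infty$.
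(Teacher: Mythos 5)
Your proposal is correct, and it reaches the conclusion by a genuinely different route than the paper, essentially reversing the order of the two claims. The paper uses the same localization $\tau_n\uparrow\infty$, but its first Gr\"onwall step only produces a crude, exponentially growing bound whose sole purpose is to show $\ex Y_t^2<\infty$ for each fixed $t$ and thereby justify the mean equation \eqref{eq:EY-equation}; the uniform bound is then obtained separately, by writing $\ex Y_t^2$ through variation of constants as $e^{-2b_2t}Y_0^2+\int_0^t e^{2b_2(s-t)}\bigl(2a_2\ex Y_s+\sigma_2^2\ex Y_s^{2\alpha_2}\bigr)\,ds$, estimating $\ex Y_s^{2\alpha_2}\le\bigl(\ex Y_s^2\bigr)^{\alpha_2}$ by Jensen, and closing the self-improving inequality \eqref{bumdyk} for $\psi(t)=e^{2b_2t}\ex Y_t^2$, which needs the boundedness of $\ex Y_s$ (hence the explicit formula \eqref{eq:EY}) as a prerequisite. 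You instead absorb the subquadratic terms pointwise at the level of the drift: Young's inequality $2a_2y+\sigma_2^2y^{2\alpha_2}\le b_2y^2+K$, valid precisely because $2\alpha_2<2$, applied inside the It\^o expansion of $e^{b_2t}Y_t^2$ and combined with stopping at $\tau_n$, gives the uniform second-moment bound in a single localized computation, after which the mean equation and its solution follow as an easy corollary. Your treatment of the localization (stopped integrand bounded by $2\sigma_2 e^{b_2t}n^{1+\alpha_2}$, restriction to $\{\tau_n>t\}$, monotone convergence) is rigorous, and Step 2 is sound since the moment bound makes the stochastic integral a true martingale and Fubini applicable. What each approach buys: yours is more elementary and transparent about where $\alpha_2<1$ enters (the strict subquadraticity is absorbed by mean reversion, consistent with the failure at $\alpha_2=1$, $\sigma_2^2\ge2b_2$), and it avoids both Jensen and the bootstrap; the paper's route yields the variation-of-constants identity for $\ex Y_t^2$, which keeps closer track of how the second moment depends on the first and could be iterated for higher moments, at the cost of a more involved closing argument.
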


\begin{proof}
Let $\tau_n = \inf\set{t \ge 0 : Y_t \ge n}$, $n \ge 1$.
By the It\^o formula,
\[
Y^2_{t \wedge \tau_n} = Y_0^2 + 2 \int_0^t \left(a_2 Y_{s \wedge \tau_n} - b_2 Y^2_{s \wedge \tau_n}  \right) ds + 2\sigma_2 \int_0^t Y^{1 + \alpha_2}_{s \wedge \tau_n}\, dB_s + \sigma_2^2 \int_0^t Y^{2\alpha_2}_{s \wedge \tau_n}\, ds.
\]
Taking expectations, we get
\[
\ex Y^2_{t \wedge \tau_n} = Y_0^2 + 2 \int_0^t \left(a_2 \ex Y_{s \wedge \tau_n} - b_2 \ex Y^2_{s \wedge \tau_n}  \right) ds + \sigma_2^2 \int_0^t \ex Y^{2\alpha_2}_{s \wedge \tau_n}\, ds,
\]
whence
\begin{align*}
\ex Y^2_{t \wedge \tau_n} 
&\le Y_0^2 + 2 a_2\int_0^t \ex Y_{s \wedge \tau_n} ds + \sigma_2^2 \int_0^t \ex Y^{2\alpha_2}_{s \wedge \tau_n}\, ds
\\
&\le Y_0^2 + a_2 \int_0^t \left(1 + \ex Y^2_{s \wedge \tau_n} \right) ds + \sigma_2^2 \int_0^t \left(1 + \ex Y^2_{s \wedge \tau_n} \right) ds.
\end{align*}
By the Gr\"onwall inequality,
\[
\ex Y^2_{t \wedge \tau_n} \le \left(Y_0^2 + a_2 t + \sigma_2^2 t\right) \exp \set{\left(a_2 + \sigma_2^2\right) t}.
\]
Since it is known that the process $Y$ exists and is unique on any interval, it follows that 
$\tau_n \uparrow \infty$, a.s.\ as $n\to\infty$,
and passing to the limit we get $\ex Y_t^2 < \infty$ for any $t > 0$.
Therefore,
\[
\ex Y_t = Y_0 + \int_0^t \left(a_2 - b_2 \ex Y_s\right) ds,
\]
which coincides with equation \eqref{eq:EY-equation}. Hence, $\ex Y_t$ is given by \eqref{eq:EY}, and \eqref{eq:EY-a} follows. Moreover, $\ex Y_t$ is uniformly bounded in $t$ and bounded away from zero (see Remark~\ref{rem:EY-monot}). Consequently, $\ex Y_t^2$ is also bounded away from zero.
Furthermore,
\begin{equation}\label{eq:EY2}
\ex Y_t^2 + 2b_2 \int_0^t \ex Y_s^2 ds = Y_0^2 + \int_0^t \left(2a_2 \ex Y_s + \sigma_2^2 \ex Y_s^{2\alpha_2} \right) ds.
\end{equation}

Denote
\[
y(t) = \int_0^t \ex Y_s^2\,ds, \quad
R_s \coloneqq 2a_2 \ex Y_s + \sigma_2^2 \ex Y_s^{2\alpha_2}.
\]
Then \eqref{eq:EY2} can be represented in the form of a differential equation:
\[
y'(t) + 2b_2 y(t) = Y_0^2 + \int_0^t R_s\,ds.
\]
Solving it and integrating by parts gives
\[
\ex Y_t^2 = y'(t) = e^{-2b_2 t} Y_0^2 + \int_0^t e^{2b_2(s-t)} R_s\,ds.
\]
In other words, taking into account boundedness of $\ex Y_s$, the fact that $\ex Y_t^2$ is separated from zero, and denoting $C_1$ and $C_2$ constants whose value can change from line to line, we can write 
\begin{align*}
\psi(t) &\coloneqq e^{2 b_2 t} \ex Y_t^2
= Y_0^2 + \int_0^t e^{2 b_2 s} R_s\,ds
= Y_0^2 + \int_0^t e^{2 b_2 s} \left( 2a_2 \ex Y_s + \sigma_2^2 \ex Y_s^{2\alpha_2}\right)\,ds
\\
&\le Y_0^2 + C_1 \int_0^t e^{2 b_2 s}\,ds + C_2 \int_0^t e^{2 b_2 s} \left(\ex Y_s^2\right)^{\alpha_2} ds
\\
&\le C_1 e^{2 b_2 t} + C_2 \int_0^t e^{2 b_2 (1-\alpha_2) s} \left(e^{2 b_2 s} \ex Y_s^2\right)^{\alpha_2} ds,
\end{align*}
whence 
\[
\sup_{0 \le u \le t} \psi(u) \le C_1 e^{2 b_2 t} + C_2 e^{2 b_2 (1-\alpha_2) t} \left(\sup_{0 \le u \le t} \psi(u)\right)^{\alpha_2},
\]
and consequently,
\begin{equation}\label{bumdyk}
\begin{split}
\left(\sup_{0 \le u \le t} \psi(u)\right)^{1-\alpha_2} 
&\le \frac{C_1 e^{2 b_2 t}}{\left(\sup_{0 \le u \le t} \psi(u)\right)^{\alpha_2}} + C_2 e^{2 b_2 (1-\alpha_2) t} 
\\
&\le \frac{C_1 e^{2 b_2(1-\alpha_2) t}}{\left(\ex Y_t^2\right)^{\alpha_2}} + C_2 e^{2 b_2 (1-\alpha_2) t}
\leq C_1e^{2 b_2(1-\alpha_2) t}.
\end{split}
\end{equation}

In particular, we get from \eqref{bumdyk} that    
$\ex Y_t^2 \le C_1$, whence  the proof follows. 
\end{proof}

\begin{lemma}\label{lem-2-3}
Let $\alpha_2 \in (\frac12,1)$.
\begin{enumerate}[(i)]
\item Let $a_2 > 0$.
Then for any initial value $Y_0$
\[
\limsup_{t \to \infty} Y_t = +\infty,\quad
\liminf_{t \to \infty} Y_t = 0\quad \text{ a.s.,}
\]
and $Y$ is a recurrent process.

\item Let $a_2 = 0$. Then $\lim_{t \to \infty} Y_t = 0$ a.s.
\end{enumerate}
\end{lemma}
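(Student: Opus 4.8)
The plan is to reuse verbatim the one-dimensional diffusion classification on the natural scale, exactly as in the proof of the linear case, following \cite[Chapter VI, Section~3]{ikeda-watanabe}. The process $Y$ solves $dY_t = (a_2 - b_2 Y_t)\,dt + \sigma_2 Y_t^{\alpha_2}\,dB_t$ on $I = (l,r) = (0,+\infty)$, and by Proposition~\ref{prop:CKLS} this equation has a unique strong solution which stays in $I$ for all $t\ge 0$; in particular the process does not explode and neither endpoint is reached in finite time, so it remains only to compute the scale function $s(x) = \int_c^x \exp\set{-\int_c^y \frac{2b(z)}{\sigma^2(z)}\,dz}\,dy$ (with $c>0$ fixed, $b(z)=a_2-b_2z$, $\sigma(z)=\sigma_2 z^{\alpha_2}$) and to evaluate $s(0+)$ and $s(+\infty)$.

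First I would carry out the elementary integration: for $z>0$ one has $\frac{2b(z)}{\sigma^2(z)} = \frac{2a_2}{\sigma_2^2}\,z^{-2\alpha_2} - \frac{2b_2}{\sigma_2^2}\,z^{1-2\alpha_2}$, and since $\alpha_2\in(\tfrac12,1)$ gives $1-2\alpha_2\in(-1,0)$ and $2-2\alpha_2\in(0,1)$, no logarithm appears — this is the only structural change from the case $\alpha_2=1$. Integrating and exponentiating, one obtains, for some positive constant $C_0$,
\[
s(x) = C_0 \int_c^x \exp\set{\frac{2a_2}{\sigma_2^2(2\alpha_2-1)}\, y^{1-2\alpha_2} + \frac{2b_2}{\sigma_2^2(2-2\alpha_2)}\, y^{2-2\alpha_2}}\,dy.
\]
As $y\to 0+$ the integrand behaves like $\exp\set{\frac{2a_2}{\sigma_2^2(2\alpha_2-1)}\, y^{-(2\alpha_2-1)}}$: when $a_2>0$ this blows up super-polynomially, so the integrand is non-integrable at $0$ and $s(0+)=-\infty$; when $a_2=0$ it tends to a finite positive limit, so it is integrable at $0$ and $s(0+)\in\real$. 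As $y\to+\infty$ the integrand behaves like $\exp\set{\frac{2b_2}{\sigma_2^2(2-2\alpha_2)}\, y^{2-2\alpha_2}}\to+\infty$ irrespective of $a_2$, hence it is non-integrable at $+\infty$ and $s(+\infty)=+\infty$ for every $a_2\ge 0$.

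Finally I would invoke \cite[Chapter VI, Theorem~3.1]{ikeda-watanabe} exactly as in the linear case: if $a_2>0$, then $s(0+)=-\infty$ and $s(+\infty)=+\infty$, so $Y$ is recurrent with $\limsup_{t\to\infty}Y_t=+\infty$ and $\liminf_{t\to\infty}Y_t=0$ a.s.\ for any $Y_0>0$; if $a_2=0$, then $s(0+)\in\real$ and $s(+\infty)=+\infty$, so $\lim_{t\to\infty}Y_t=0$ a.s. The computation is routine and I do not expect a genuine obstacle; the only points requiring attention are the signs of the constants $\frac{1}{1-2\alpha_2}<0$ and $\frac{1}{2-2\alpha_2}>0$ when passing from $2b/\sigma^2$ to the scale density, and confirming that the power-law rates $y^{1-2\alpha_2}$ and $y^{2-2\alpha_2}$ — rather than a logarithm — make the exponential integrand genuinely non-integrable at the relevant endpoint.
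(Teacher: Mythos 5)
Your proposal is correct and follows essentially the same route as the paper: compute the scale function for $b(z)=a_2-b_2z$, $\sigma(z)=\sigma_2 z^{\alpha_2}$, observe that $s(0+)=-\infty$ when $a_2>0$ (the integrand blows up like $\exp\{K y^{-(2\alpha_2-1)}\}$, $K>0$) while $s(0+)$ is finite when $a_2=0$, that $s(+\infty)=+\infty$ in both cases, and then invoke items (1) and (2) of \cite[Chapter VI, Theorem~3.1]{ikeda-watanabe}. The coefficient $\frac{2b_2}{\sigma_2^2(2-2\alpha_2)}$ in your integrand agrees with the paper's $\frac{b_2}{\sigma_2^2(1-\alpha_2)}$, so the computations match exactly.
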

\begin{proof}
We apply again \cite[Theorem 3.1]{ikeda-watanabe}.
Now, 
\begin{align*}
s(x) &= \int_c^x \exp\set{-2 \int_c^y \frac{a_2 - b_2 z}{\sigma_2^2 z^{2\alpha_2}}\,dz} dy
\\
&= \int_c^x \exp\set{\frac{2 a_2}{\sigma_2^2 \left(2\alpha_2 - 1\right)}
\left(\frac{1}{y^{2\alpha_2 - 1}} - \frac{1}{c^{2\alpha_2 - 1}}\right)
+ \frac{b_2}{\sigma_2^2 \left(1 - \alpha_2\right)}
\left(y^{2 - 2\alpha_2} - c^{2 - 2\alpha_2} \right)} dy
\\
&=\exp\set{-\frac{2 a_2}{\sigma_2^2 \left(2\alpha_2 - 1\right)}
c^{1- 2\alpha_2}
- \frac{b_2}{\sigma_2^2 \left(1 - \alpha_2\right)}
 c^{2 - 2\alpha_2}}
\\*
&\quad\times
\int_c^x \exp\set{\frac{2 a_2}{\sigma_2^2 \left(2\alpha_2 - 1\right)} y^{1 - 2\alpha_2}
+ \frac{b_2}{\sigma_2^2 \left(1 - \alpha_2\right)}
y^{2 - 2\alpha_2}} dy.
\end{align*}
Note that $1 < 2\alpha_2 < 2$.

$(i)$ Therefore in the case $a_2 > 0$
\[
- \lim_{x \to 0+} s(x) = \lim_{x \to +\infty} s(x) = +\infty.
\]
Then the proof follows from item (1) of \cite[Theorem 3.1]{ikeda-watanabe}.

$(ii)$ In the case $a_2 = 0$ 
\[
\lim_{x\to 0+} s(x) > -\infty, \quad
\lim_{x\to +\infty} s(x) = +\infty,
\]
and the proof follows from item (2) of \cite[Theorem 3.1]{ikeda-watanabe}.
\end{proof}
 
\subsection{Cox--Ingersoll--Ross (CIR) process}
\label{sec-CIR}
Let $\alpha_2 = \frac12$.
If $a_2 \ge \frac{\sigma_2^2}{2}$, then $Y$ is a.s.\ strictly positive. If $a_2 < \frac{\sigma^2}{2}$, then $Y$ achieves 0 with probability 1.
In both cases $Y$ is ergodic.
The following statements summarize well-known properties of the CIR process (see, e.g., \cite{AndPit07, borkovec98,dmr22}).
\begin{proposition}\label{prop:CIR}
Let $\alpha_2 = \frac12$.
\begin{enumerate}
\item The equation \eqref{eq:dY} has a unique strong solution $Y=\{Y_t,t\ge0\}$.
\item 
If $a_2\ge \sigma_2^2/2$, then the process $Y$ is a.s.\ strictly positive.
If $0< a_2 < \sigma_2^2/2$, then $Y$ achieves 0 with probability 1, however 0 is a strongly reflecting barrier, in the sense that the   time spent at zero is of Lebesgue measure zero (i.e., the process can touch the barrier, but will leave it immediately).

\item If $a_2 > 0$, then the process $Y$ is ergodic with the following stationary density  that corresponds to gamma distribution:
\[
p_{\infty}(x)=\left(\frac{2b_2}{\sigma_2^2}\right)^{\frac{2a_2}{\sigma_2^2}}x^{\frac{2a_2}{\sigma_2^2}-1} \exp\set{-\frac{2b_2}{\sigma_2^2} x} \Bigm/ \Gamma\left(\frac{2a_2}{\sigma_2^2}\right), \quad x>0.
\]
\end{enumerate}
\end{proposition}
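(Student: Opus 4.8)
\emph{Overall strategy.} The plan is to prove the three assertions by reducing each to a classical fact about one-dimensional homogeneous diffusions --- existence/uniqueness from Yamada--Watanabe, boundary behaviour from the scale function, ergodicity and the invariant law from the speed measure --- exactly as in the linear and sublinear cases above. For~(1), the drift $y\mapsto a_2-b_2y$ is Lipschitz and the diffusion coefficient $y\mapsto\sigma_2\sqrt{y}$ is $\tfrac12$-H\"older continuous on $[0,\infty)$, so pathwise uniqueness holds by the Yamada--Watanabe criterion \cite[Prop.~2.13, p.~291]{karatzas-shreve}; a weak, $[0,\infty)$-valued solution exists by the standard construction for continuous coefficients of at most linear growth (or by realising $Y$ as a space- and time-transformed squared Bessel process), and pathwise uniqueness together with weak existence yield a unique strong solution. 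The same conclusion also follows from the more general existence and uniqueness results collected in the Appendix.

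\emph{Boundary analysis for~(2).} I would compute, as in the proof of Lemma~\ref{lem-2-3}, the scale density
\[
s'(y)=\exp\set{-\int_c^y\frac{2(a_2-b_2z)}{\sigma_2^2 z}\,dz}
=\Bigl(\frac{y}{c}\Bigr)^{-2a_2/\sigma_2^2}\exp\set{\frac{2b_2}{\sigma_2^2}(y-c)},
\]
so that $s'(y)\asymp y^{-2a_2/\sigma_2^2}$ as $y\to0+$, while $s(+\infty)=+\infty$ for all $a_2\ge0$, $b_2>0$. Hence $s(0+)=-\infty$ precisely when $2a_2/\sigma_2^2\ge1$, i.e.\ $a_2\ge\sigma_2^2/2$, and then \cite[Chapter~VI, Theorem~3.1]{ikeda-watanabe} gives that the boundary $0$ is inaccessible, so $Y$ is a.s.\ strictly positive. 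If $0<a_2<\sigma_2^2/2$, then $s(0+)>-\infty$; the speed density $m(y)=(\sigma_2^2 y\,s'(y))^{-1}\asymp y^{2a_2/\sigma_2^2-1}$ is integrable at $0$ (as $a_2>0$), so $0$ is a \emph{regular} boundary, and combined with $s(+\infty)=+\infty$ this makes $Y$ recurrent on $(0,\infty)$, hence $Y$ hits $0$ with probability one.

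\emph{The reflecting-barrier claim and statement~(3).} It remains to show that $0$ is \emph{instantaneously reflecting}, i.e.\ the sojourn of $Y$ at $0$ is Lebesgue-null. The speed measure $m(dy)=\mathrm{const}\cdot y^{2a_2/\sigma_2^2-1}e^{-2b_2 y/\sigma_2^2}\,dy$ is absolutely continuous, hence carries no atom at $0$; by the occupation-time formula for regular diffusions, $\int_0^t g(Y_s)\,ds=\int_{[0,\infty)}g(a)\,\ell_t^a\,m(da)$ with $\ell^a$ the local time of $Y$ normalised with respect to $m$, so taking $g=\ind_{\set{0}}$ gives $\int_0^t\ind_{\set{Y_s=0}}\,ds=m(\set{0})\,\ell_t^0=0$ a.s.; alternatively, a sojourn of $Y$ at $0$ on an interval $[t_0,t_1]$ would force $0=Y_{t_1}-Y_{t_0}=\int_{t_0}^{t_1}a_2\,ds>0$ in \eqref{eq:dY}, a contradiction. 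This proves~(2). For~(3), the total speed measure is finite whenever $a_2>0$, $b_2>0$ (its density is $\asymp y^{2a_2/\sigma_2^2-1}$ at $0$ and decays exponentially at $\infty$), so $Y$ is ergodic by the ergodic theory of one-dimensional homogeneous diffusions \cite[Ch.~1, \S\,3]{skorokhod89}, with invariant density equal to the normalised speed density $p_\infty(x)=C\,x^{2a_2/\sigma_2^2-1}e^{-2b_2 x/\sigma_2^2}$; evaluating $\int_0^\infty x^{2a_2/\sigma_2^2-1}e^{-2b_2 x/\sigma_2^2}\,dx=\Gamma\!\left(2a_2/\sigma_2^2\right)\!\left(\sigma_2^2/2b_2\right)^{2a_2/\sigma_2^2}$ fixes $C$ and yields the stated gamma density. (Equivalently, one checks directly that this density solves the stationary forward Kolmogorov equation $\tfrac12(\sigma_2^2 x\,p)''=((a_2-b_2x)p)'$ with vanishing flux.)

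\emph{Main obstacle.} The step I expect to be the crux is the fine classification of the boundary $0$ in the subcritical regime $0<a_2<\sigma_2^2/2$: one must go beyond ``$0$ is attainable'' and certify that it is \emph{regular} --- hence instantaneously reflecting rather than exit, absorbing, or sticky --- which is exactly where the joint input of $s(0+)$ being finite and $m$ being locally finite and \emph{non-atomic} at $0$ is genuinely used. Everything else is a routine application of the one-dimensional diffusion toolkit already invoked above, and indeed the entire proposition may simply be quoted from \cite{AndPit07,borkovec98,dmr22}.
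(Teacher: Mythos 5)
Your proposal is correct and uses exactly the toolkit the paper relies on for this proposition: the paper gives no proof at all, simply citing the literature (Andersen--Piterbarg, Borkovec--Kl\"uppelberg, Dehtiar--Mishura--Ralchenko), and your Yamada--Watanabe/scale-function/speed-measure argument is the standard justification, matching the computations the paper itself carries out for the CKLS case and in Lemma~\ref{l:CIR-paths}. One small remark: your ``alternative'' interval-sojourn argument alone would not give a Lebesgue-null zero set (a set of positive measure need not contain an interval), but your main occupation-time/non-atomic speed measure argument does, so the proof stands.
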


\begin{remark}
Note that when $b_2 = 0$, the process $Y$ reduces to a squared Bessel process, which is non-ergodic. For a detailed discussion on the squared Bessel process and its comparison with the CIR model, we refer the reader to the recent study \cite{MRK25} and the references therein.
\end{remark}

\begin{lemma}\label{l:CIR-moments}
Let $\alpha_2 = \frac12$.
The first two moments of $Y_t$ are equal to
\begin{gather*}
\ex Y_t  = \left(Y_0 - \frac{a_2}{b_2}\right) e^{-b_2 t} + \frac{a_2}{b_2},
\\
\ex Y_t^2 = Y_0^2 e^{-2b_2t} + \frac{Y_0 (\sigma_2^2 + 2 a_2)}{b_2} \left(e^{-b_2t} - e^{-2b_2t}\right)
+ \frac{a_2(\sigma_2^2 + 2a_2)}{2b_2^2}\left(1 - e^{-b_2 t}\right)^2.
\end{gather*}
Hence,
\[
\ex Y_t \to \frac{a_2}{b_2},
\qquad
\Var Y_t \to \frac{a_2 \sigma_2^2}{2 b_2^2}
\qquad\text{as } t \to \infty.
\]
\end{lemma}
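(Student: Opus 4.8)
The plan is to derive both moment formulas from the SDE \eqref{eq:dY} by taking expectations, exactly as in the linear case. First I would note that in the CIR case ($\alpha_2 = \tfrac12$) the diffusion coefficient is $\sigma_2 \sqrt{Y_t}$, so the stochastic integral $\int_0^t \sigma_2 \sqrt{Y_s}\,dB_s$ is a (local) martingale; the standard localization via $\tau_n = \inf\{t : Y_t \ge n\}$ together with the already-known existence and uniqueness of the solution (Proposition~\ref{prop:CIR}, item~1) and the moment bound obtained by Gr\"onwall's inequality (the argument is identical to the one in the proof of Lemma~\ref{lem-2-2}) shows that the integral is a true martingale and all moments used below are finite. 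Taking the expectation in \eqref{eq:dY} then gives the integral equation \eqref{eq:EY-equation}, whose solution is the stated formula for $\ex Y_t$ --- this part is literally the same computation as in the linear case.

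For the second moment, apply the It\^o formula to $Y_t^2$. Since $d\langle Y\rangle_t = \sigma_2^2 Y_t\,dt$, we obtain
\[
Y_t^2 = Y_0^2 + \int_0^t \bigl(2a_2 Y_s - 2 b_2 Y_s^2 + \sigma_2^2 Y_s\bigr)\,ds + 2\sigma_2 \int_0^t Y_s^{3/2}\,dB_s.
\]
Taking expectations (the stochastic integral again vanishes after localization) yields
\[
\ex Y_t^2 = Y_0^2 + \int_0^t \bigl((2a_2 + \sigma_2^2)\,\ex Y_s - 2 b_2 \,\ex Y_s^2\bigr)\,ds.
\]
Writing $g(t) = \ex Y_t^2$ and substituting the explicit expression for $\ex Y_s$ from the first part gives the linear ODE
\[
g'(t) = -2 b_2\, g(t) + (2a_2 + \sigma_2^2)\left[\left(Y_0 - \tfrac{a_2}{b_2}\right) e^{-b_2 t} + \tfrac{a_2}{b_2}\right], \qquad g(0) = Y_0^2.
\]
This is solved by the integrating-factor method: multiply by $e^{2b_2 t}$, integrate, and simplify. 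The forcing term has components proportional to $e^{-b_2 t}$ and to a constant, producing terms in $e^{-2b_2 t}$, $e^{-b_2 t}$ and a constant in $g(t)$; collecting them and matching the initial condition gives precisely the claimed closed form. The limiting statements follow by letting $t \to \infty$: $\ex Y_t \to a_2/b_2$ immediately, and $\Var Y_t = \ex Y_t^2 - (\ex Y_t)^2 \to \frac{a_2(\sigma_2^2 + 2a_2)}{2 b_2^2} - \frac{a_2^2}{b_2^2} = \frac{a_2 \sigma_2^2}{2 b_2^2}$.

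The only genuinely delicate point is justifying that the stochastic integral $\int_0^t Y_s^{3/2}\,dB_s$ is a martingale with zero mean, since in the CIR case with $a_2 < \sigma_2^2/2$ the process can touch zero; however, $Y \ge 0$ and $\ex\int_0^t Y_s^3\,ds < \infty$ follows from finiteness of the third moment (again a Gr\"onwall estimate after localization, or directly from the known gamma-type stationary law and moment bounds), so no boundary subtlety actually intervenes --- the solution is continuous, nonnegative, and has all polynomial moments locally bounded in $t$. Everything else is routine linear-ODE bookkeeping, and no comparison or boundary-classification argument is needed here. I would present the martingale-property justification in one sentence referring back to the localization argument already spelled out in Lemma~\ref{lem-2-2}, and then carry out the ODE solution in the minimal detail needed to reach \eqref{eq:EY^2}'s CIR analogue.
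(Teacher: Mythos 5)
Your derivation is correct. The paper states Lemma~\ref{l:CIR-moments} without proof, treating these CIR moment formulas as well known, and your route --- take expectations in \eqref{eq:dY} to recover \eqref{eq:EY-equation} for $\ex Y_t$, apply It\^o's formula to $Y_t^2$ (with $d\langle Y\rangle_t=\sigma_2^2 Y_t\,dt$), and solve the resulting linear ODE $g'(t)=-2b_2 g(t)+(2a_2+\sigma_2^2)\ex Y_t$ by the integrating factor $e^{2b_2 t}$ --- is exactly the standard computation and mirrors what the paper itself does in the linear case and in Lemma~\ref{lem-2-2}; the closed form and the limits $\ex Y_t\to a_2/b_2$, $\Var Y_t\to a_2\sigma_2^2/(2b_2^2)$ check out. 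One small caution: your aside that $\ex\int_0^t Y_s^3\,ds<\infty$ could be read off \emph{directly} from the gamma-type stationary law is not a valid justification for fixed-$t$ moments of the transient process, but the localization/Gr\"onwall argument you actually rely on (as in Lemma~\ref{lem-2-2}) does the job, so nothing essential is missing.
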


\begin{lemma}\label{l:CIR-paths}
Let $\alpha_2 = \frac12$.
\begin{enumerate}[(i)]
\item Let $a_2 \ge \frac{\sigma_2^2}{2}$.
Then for any initial value $Y_0$
\[
\limsup_{t \to\infty} Y_t = +\infty,\quad
\liminf_{t \to \infty} Y_t = 0\quad \text{ a.s.,}
\]
and $Y$ is a recurrent process.

\item Let $0 \le a_2 < \frac{\sigma_2^2}{2}$. Then $\lim_{t \to \infty} Y_t = 0$ a.s.
\end{enumerate}
\end{lemma}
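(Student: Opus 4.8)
First I would follow the same route as in the linear case and in the proof of Lemma~\ref{lem-2-3}, applying the scale-function criterion \cite[Chapter~VI, Theorem~3.1]{ikeda-watanabe} to the diffusion $Y$ on $I=(0,+\infty)$. With $b(z)=a_2-b_2 z$ and $\sigma(z)=\sigma_2\sqrt z$ one has $\frac{2b(z)}{\sigma^2(z)}=\frac{2a_2}{\sigma_2^2 z}-\frac{2b_2}{\sigma_2^2}$, so that for a fixed $c>0$
\[
s(x)=c^{2a_2/\sigma_2^2}\exp\set{-\frac{2b_2 c}{\sigma_2^2}}\int_c^x y^{-2a_2/\sigma_2^2}\exp\set{\frac{2b_2 y}{\sigma_2^2}}\,dy,
\]
and the whole argument reduces to identifying the two boundary limits $s(0+)$ and $s(+\infty)$.

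The limit at $+\infty$ is immediate: since $b_2>0$, the exponential factor forces $\int_c^{+\infty}y^{-2a_2/\sigma_2^2}\exp\set{2b_2 y/\sigma_2^2}\,dy=+\infty$ for every $a_2\ge 0$, hence $s(+\infty)=+\infty$. Near $0$ the integrand is comparable to $y^{-2a_2/\sigma_2^2}$, so $\int_c^{0+}$ diverges to $-\infty$ exactly when $\frac{2a_2}{\sigma_2^2}\ge 1$ and is finite when $\frac{2a_2}{\sigma_2^2}<1$ --- which is precisely the split $a_2\ge\frac{\sigma_2^2}{2}$ versus $a_2<\frac{\sigma_2^2}{2}$. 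In case~$(i)$ this gives $s(0+)=-\infty$ and $s(+\infty)=+\infty$, and item~(1) of \cite[Theorem~3.1]{ikeda-watanabe} yields recurrence together with $\limsup_{t\to\infty}Y_t=+\infty$ and $\liminf_{t\to\infty}Y_t=0$ a.s. In case~$(ii)$ it gives $s(0+)>-\infty$ and $s(+\infty)=+\infty$, so item~(2) of the same theorem produces $Y_t\to 0$ a.s.

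I expect the main obstacle to be the boundary $0$ in the sub-Feller range $0<a_2<\frac{\sigma_2^2}{2}$ (the endpoint $a_2=0$ being unproblematic, since then $b(0)=\sigma(0)=0$, so $0$ is genuinely absorbing and $Y_t\to 0$ a.s.\ is unambiguous). There $s(0+)$ is finite, so $0$ is accessible and $Y$ hits it in finite time a.s., in agreement with item~(2) of Proposition~\ref{prop:CIR}; but the conclusion of item~(2) of \cite[Theorem~3.1]{ikeda-watanabe} refers to the minimal diffusion on $(0,+\infty)$ \emph{absorbed} at $0$, whereas items~(2)--(3) of Proposition~\ref{prop:CIR} say the CIR solution treats $0$ as instantaneously reflecting and is in fact ergodic, so that $\set{Y_t=0}$ is not invariant (indeed $b(0)=a_2>0$ here). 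The argument must therefore be explicit about which convention is in force: under the absorbing convention, $Y_t\to 0$ a.s.\ follows from the a.s.\ finite hitting time of $0$; under the reflecting convention of Proposition~\ref{prop:CIR}, the appropriate conclusion in this range is instead recurrence with $\limsup_{t\to\infty}Y_t=+\infty$ and $\liminf_{t\to\infty}Y_t=0$, read off from ergodicity. Spelling out this reconciliation is, I believe, the only point at which the CIR case genuinely differs from the sublinear analysis of Lemma~\ref{lem-2-3}.
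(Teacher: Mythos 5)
Your argument is exactly the paper's proof of Lemma~\ref{l:CIR-paths}: the same scale function $s(x)=c^{2a_2/\sigma_2^2}e^{-2b_2c/\sigma_2^2}\int_c^x y^{-2a_2/\sigma_2^2}e^{2b_2y/\sigma_2^2}\,dy$, the same boundary limits $s(+\infty)=+\infty$ and $s(0+)=-\infty$ or finite according as $2a_2/\sigma_2^2\ge 1$ or $<1$, and the same appeal to items (1) and (2) of \cite[Chapter VI, Theorem 3.1]{ikeda-watanabe}. The reservation in your last paragraph is well taken and is \emph{not} addressed in the paper: that theorem describes the diffusion only up to its exit time from $(0,+\infty)$, so for $0<a_2<\frac{\sigma_2^2}{2}$ it yields $Y_t\to 0$ only as $t$ approaches the (a.s.\ finite) first hitting time of zero, whereas the actual CIR solution is instantaneously reflected there and ergodic with a gamma stationary law (Proposition~\ref{prop:CIR}), which is incompatible with $\lim_{t\to\infty}Y_t=0$ a.s.; in that range the honest conclusion is recurrence with $\liminf_{t\to\infty}Y_t=0$ and $\limsup_{t\to\infty}Y_t=+\infty$, the endpoint $a_2=0$ (absorbing zero) being the only case where item (ii) holds as literally stated. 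So your proof matches the paper's, and your concluding remark pinpoints a genuine gap in the paper's own treatment of item (ii) rather than in your argument.
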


\begin{proof}
We apply \cite[Theorem 3.1, Chapter VI]{ikeda-watanabe}. First, we compute
\begin{align*}
s(x) &= \int_c^x \exp\set{-2 \int_c^y \frac{a_2 - b_2 z}{\sigma_2^2 z}\,dz} dy
= \int_c^x \left(\frac{y}{c}\right)^{-\frac{2a_2}{\sigma_2^2}}\exp\set{\frac{2b_2}{\sigma_2^2}(y-c)}\, dy
\\
&= c^{\frac{2a_2}{\sigma_2^2}} \exp\set{-\frac{2b_2 c}{\sigma_2^2}} \int_c^x y^{-\frac{2a_2}{\sigma_2^2}}\exp\set{\frac{2b_2}{\sigma_2^2} y}\, dy.
\end{align*}
Since the integrand
$y^{-\frac{2a_2}{\sigma_2^2}}\exp\set{\frac{2b_2}{\sigma_2^2} y}$
is positive and tends to infinity as $y \uparrow \infty$, it follows that $s(x) \to +\infty$ as $x \uparrow \infty$ for any $a_2 \geq 0$.

Moreover, 
\[
\lim_{x \to 0} s(x) = - c^{\frac{2a_2}{\sigma_2^2}} \exp\set{-\frac{2b_2 c}{\sigma_2^2}} \int_0^c y^{-\frac{2a_2}{\sigma_2^2}}\exp\set{\frac{2b_2}{\sigma_2^2} y}\, dy
\]
is either $-\infty$ or finite, depending on whether $\frac{2a_2}{\sigma_2^2} \geq 1$ or $\frac{2a_2}{\sigma_2^2} < 1$.

Thus, the conclusions of (i) and (ii) follow from statements (1) and (2) of \cite[Theorem 3.1]{ikeda-watanabe}, respectively.
\end{proof}

\section{Properties of the  external  process $X$}\label{external}

Now we turn to the properties of   the  external  process   $X$, being interested   in the impact of the  internal  process on the properties of the external one. 

\subsection{Existence and uniqueness results}\label{exanduni}

First, we consider the system of equations \eqref{eq:dX}--\eqref{eq:dY} and establish existence-uniqueness result. 
\begin{theorem}\label{theorsyst}
The system of equations \eqref{eq:dX}--\eqref{eq:dY} has the unique strong solution, both processes $X$ and $Y$ are non-negative, and the solution is a strong Markov process.
\end{theorem}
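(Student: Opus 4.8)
The plan is to treat the system \eqref{eq:dX}--\eqref{eq:dY} as a triangular system: the equation \eqref{eq:dY} for $Y$ does not involve $X$, so one can first solve it and then plug the solution into \eqref{eq:dX}. For the first step I would invoke the results already established in Section~\ref{sec-auxiliary}: in each of the three regimes ($\alpha_2=1$, $\alpha_2\in(\tfrac12,1)$, $\alpha_2=\tfrac12$) equation \eqref{eq:dY} has a unique strong solution which is non-negative, so a pathwise-unique, non-negative adapted process $Y$ is available on the whole half-line. It remains to solve \eqref{eq:dX} with this fixed continuous coefficient $Y$ in the drift, i.e.\ the inhomogeneous one-dimensional SDE $dX_t=(a_1Y_t-b_1X_t)\,dt+\sigma_1X_t^{\alpha_1}\,dW_t$ with $X_0>0$. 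The plan is to appeal to the general existence-uniqueness and comparison results for this type of equation that the paper defers to the appendix (the diffusion coefficient $x\mapsto\sigma_1 x^{\alpha_1}$ is $\tfrac12$-Hölder for $\alpha_1\in[\tfrac12,1]$, so Yamada--Watanabe gives pathwise uniqueness, and the linear-growth drift together with the sublinear diffusion gives no explosion).

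The key intermediate points I would spell out are the following. First, non-negativity of $X$: since $Y_t\ge0$ and $a_1\ge0$, the drift at $x=0$ equals $a_1Y_t\ge0$, so by the comparison theorem (comparing with the solution started just above $0$, or with the trivial solution when $a_1Y\equiv0$) $X$ stays non-negative; combined with $X_0>0$ this also shows $X$ cannot be driven negative through the boundary. Second, because $Y$ has continuous paths and $\sup_{t\le T}Y_t<\infty$ a.s.\ on each finite horizon, the drift is of linear growth in $x$ with a locally bounded time-dependent coefficient, so a standard localization (stopping at $\tau_n=\inf\{t:X_t\ge n\}$) plus Grönwall on $\ex X_{t\wedge\tau_n}$ (or $\ex X_{t\wedge\tau_n}^2$) shows $\tau_n\uparrow\infty$, hence no explosion and a global strong solution. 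Uniqueness of the pair follows from uniqueness of $Y$ together with pathwise uniqueness of $X$ given $Y$.

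For the strong Markov property I would argue that the pair $(X,Y)$ is a strong solution of an autonomous (time-homogeneous) SDE system driven by the two-dimensional (correlated) Brownian motion $(W,B)$ with locally Lipschitz-off-the-diagonal, sublinearly growing coefficients; by the Yamada--Watanabe pathwise-uniqueness principle the solution is a strong Markov process — this is the standard conclusion, e.g.\ via \cite[Ch.~5]{karatzas-shreve} or \cite[Ch.~IV]{ikeda-watanabe}, that pathwise uniqueness for a time-homogeneous SDE yields a strong Markov family. I expect the only genuinely delicate point to be the behaviour of $X$ at the boundary $0$ when $\alpha_1=\tfrac12$ and the ``Feller-type'' quantity $2a_1Y_t$ can be small: here one must be careful that $X$ does not become negative, but the comparison argument above, using $a_1Y_t\ge0$, handles it, and in any case only non-negativity (not strict positivity) is claimed. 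The rest is routine once the appendix results are cited, so I would keep the write-up short: state the triangular reduction, cite Section~\ref{sec-auxiliary} for $Y$ and the appendix for $X$, give the two-line comparison argument for non-negativity and the two-line localization argument for non-explosion, and invoke the standard pathwise-uniqueness$\Rightarrow$strong-Markov theorem.
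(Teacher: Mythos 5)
Your proposal is correct and overlaps heavily with the paper's proof: both exploit the triangular structure for uniqueness (once two solutions of \eqref{eq:dX} are compared, $Y$ cancels and Yamada's argument applies), both obtain non-negativity of $X$ by comparison with the solution of $d\widehat X_t=-b_1\widehat X_t\,dt+\sigma_1|\widehat X_t|^{\alpha_1}\,dW_t$ started at $0$, which is identically zero, and both deduce the strong Markov property from existence plus pathwise uniqueness. The genuine difference is in the existence step. The paper does not solve the equations sequentially: it views \eqref{eq:dX}--\eqref{eq:dY} as a single two-dimensional SDE with continuous coefficients of at most linear growth (replacing $x^{\alpha_i}$ by $(x^+)^{\alpha_i}$ so the coefficients are defined on all of $\real$), invokes Skorokhod's existence theorem \cite[p.~59]{skor61} to get an adapted non-explosive solution $(\overline X,\overline Y)$ of the modified system, and only afterwards uses the comparison theorem to show $\overline X\ge 0$, which identifies $\overline X$ with a solution of the original equation. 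Your route instead fixes the (unique, non-negative) $Y$ from Section~\ref{sec-auxiliary} and solves \eqref{eq:dX} with the stochastic drift input $a_1Y$; this is closer in spirit to the appendix and makes the non-explosion/localization argument explicit, but it buys nothing the paper's route doesn't already give, and it forces you either to verify moment conditions on $Y$ or to argue pathwise as you sketch. One caveat if you lean on the appendix: Theorem~\ref{th:ex-un}(ii) is stated only for $\alpha\in(\frac12,1)$, so for $\alpha_1=\frac12$ (and $\alpha_1=1$, handled by item $(i)$) you cannot cite it verbatim and must rely on the direct Yamada--Watanabe plus localization argument you outline — which does work, since the H\"older exponent $\tfrac12$ is exactly the boundary case covered by Yamada--Watanabe, and the comparison Theorem~\ref{th:comparison} does cover $\alpha\in[\frac12,1)$. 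Your one-line justification of the strong Markov property matches the paper's, which is equally brief.
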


 \begin{proof}
System  \eqref{eq:dX}--\eqref{eq:dY} can be considered as two-dimensional diffusion equation with linear drift and diffusion coefficient that consists of power functions of the form $\sigma_i (x^+)^{\alpha_i}$, $i = 1, 2$, with power indices $\alpha_i$ not exceeding 1.  It means  that all coefficients are continuous functions of at most linear growth. Then it follows from the existence theorem  proved in \cite[p.~59]{skor61},  that this system has an $\mathbb{F}$-adapted solution $(\overline X, \overline Y)$. Moreover, this solution is obviously unique for the equation \eqref{eq:dY}, and being  non-explosive, is a continuous stochastic process on any interval. Then, establishing uniqueness of the solution of equation \eqref{eq:dX}, we, as usual,  consider two solutions and subtract them. Since  process $Y$ disappears after subtraction, the uniqueness can be  proved by the same steps as in Yamada theorem, see  e.g., \cite[Theorem 3.2, p.~182]{ikeda-watanabe}.

Recall that $Y$ and $a_1$ are non-negative, therefore, $\overline Y = Y$.
Now we shall prove the non-negativity of $\overline X$, which will allow us to transit from the diffusion coefficient $\sigma_1 (x^+)^{\alpha_1}$ to $\sigma_1 x^{\alpha_1}$ and identify $\overline X$ with $X$.  
Application of the comparison theorem (Theorem \ref{comp} in Appendix) shows that $\overline X_t$ exceeds a solution to the equation
\[
d\widehat X_t  =- b_1 \widehat X_t\,dt + \sigma_1 |\widehat X_t|^{\alpha_1}\,dW_t,\quad \widehat X_0=0,
\]
 which is identically equal to 0 because of uniqueness and existence of a strong solution.
 That is, $\overline X = X$ is a.s.\ non-negative process satisfying \eqref{eq:dX}.

The strong Markov property follows from existence and uniqueness of the solution.
Theorem is proved.  
 \end{proof}

\begin{remark}
It is possible to prove a more general result about existence and uniqueness solution of equation \eqref{eq:dX}, without assuming that $Y$ is a solution to \eqref{eq:dY}. This is addressed in Theorem \ref{th:ex-un} in the Appendix.
\end{remark}

\subsection{Volatility process may remain quite close to zero for some time, with probability far from zero}\label{sec-closetozero}

Our next objective, which is one of the main objectives of the paper,  is to investigate the behavior of the vector process $(X,Y)$ in a neighborhood of the point $(0,0)$. More precisely, we establish the following two facts: in Theorem \ref{thm:access} it is proved that for any $\alpha_1,\alpha_2\in[\frac12,1)$ and any initial condition, the process $(X, Y)$ enters every open square of the form $(0, \epsilon)^2$ with probability one.
In Theorem~\ref{thm:hitting}, we consider specific cases where one of the parameters, either $\alpha_1$ or $\alpha_2$, is equal to $\tfrac{1}{2}$. In these settings, we can establish a stronger result: namely, if $\alpha_1 = \tfrac{1}{2}$ (resp.\ $\alpha_2 = \tfrac{1}{2}$), then with probability one, the process $X$ (resp.\ $Y$) hits zero while the other process $Y$ (resp.\ $X$) becomes arbitrarily small.
We now state our main results, Theorems~\ref{thm:access} and~\ref{thm:hitting}, followed by the auxiliary Lemmas~\ref{lem:hit1}--\ref{lem:enterR3}, and then provide the proofs of the main theorems.

\begin{theorem}\label{thm:access}
Assume that $\alpha_1,\alpha_2\in[\frac12,1)$. Then the process $(X,Y)$ is recurrent, i.e., for any  nonempty
open set $G\subset (0,\infty)^2$ and any initial starting point
\[
\prob(\forall t_0\, \exists t\geq t_0 \colon (X_t,Y_t)\in G) = 1.
\]
\end{theorem}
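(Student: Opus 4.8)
The plan is to prove recurrence of the two–dimensional diffusion $(X,Y)$ on the open quadrant $(0,\infty)^2$ by combining the one–dimensional recurrence statements already established (Lemmas~\ref{lem-2-3}(i) and \ref{l:CIR-paths}(i) for $Y$, together with an analogous analysis for $X$ with $Y$ frozen at a favourable level) with a support–theorem argument that lets us steer the pair into any prescribed open box $(0,\epsilon)^2$. Since it suffices to show that $(X,Y)$ returns to every box of the form $G_\epsilon=(0,\epsilon)^2$ infinitely often, and every nonempty open $G$ contains a translate of such a set after an obvious scaling of the verification, I would reduce the theorem to: from any starting point, $(X,Y)$ enters $G_\epsilon$ in finite time a.s., and then use the strong Markov property to iterate.

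First I would control the internal process $Y$ in isolation. By Lemmas~\ref{lem-2-3}(i) and \ref{l:CIR-paths}(i), whenever $a_2>0$ (if $\alpha_2\in(\tfrac12,1)$) or $a_2\ge\sigma_2^2/2$ (if $\alpha_2=\tfrac12$), $Y$ is recurrent with $\liminf_{t\to\infty}Y_t=0$ a.s.; in particular $Y$ hits the interval $(0,\epsilon)$ at arbitrarily large times. (The degenerate sub–cases where $Y_t\to 0$ a.s.\ are even more favourable for the conclusion about $Y$, but then one must be slightly more careful, since $Y$ may drift to $0$ and the coupling with $X$ changes; I would handle these by noting $Y$ still visits $(0,\epsilon)$.) Next, fix a small level and observe that on any finite time window where $Y$ stays below $\epsilon$, the external equation \eqref{eq:dX} has drift $a_1Y_t-b_1X_t$ bounded above by $a_1\epsilon-b_1X_t$, so by the comparison theorem (Theorem~\ref{comp}) $X$ is dominated by a CKLS/CIR-type process with small mean-reversion level $a_1\epsilon/b_1$; such a process, started from any value, is driven below $\epsilon$ within a random but a.s.-finite time, and one can bound the probability that this happens within a fixed horizon $T$ away from zero, uniformly over starting points in a compact set.

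The technical heart is then a \textbf{two–step steering argument} via the Stroock--Varadhan support theorem (applicable because the coefficients are continuous and the diffusion is nondegenerate on $(0,\infty)^2$). From a generic starting point $(x_0,y_0)\in(0,\infty)^2$: (1) pick a control path for the Brownian drivers that pushes $Y$ down to a value $y_1<\epsilon/2$ over a time interval $[0,T_1]$; this has positive probability by the support theorem, and along the way $X$ remains in some compact set $K\subset(0,\infty)$. (2) On $[T_1,T_1+T_2]$, hold $Y$ below $\epsilon$ (possible with positive probability on a fixed window, again by the support theorem, since $y_1$ is already small and the coefficients are well-behaved there) while simultaneously, by the comparison bound above, forcing $X$ below $\epsilon$; this too occurs with probability bounded below uniformly over $x\in K$. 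Concatenating, the probability that $(X,Y)$ visits $G_\epsilon=(0,\epsilon)^2$ during $[0,T_1+T_2]$ is at least some $p=p(\epsilon,x_0,y_0)>0$. A compactness argument upgrades this to a uniform lower bound $p_0>0$ over all starting points in any fixed compact $\widetilde K\subset(0,\infty)^2$.

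Finally I would close the loop with the usual Borel--Cantelli / strong Markov iteration: since the diffusion is conservative (Theorem~\ref{theorsyst}) and cannot reach $(0,0)$ or escape to infinity in finite time, its trajectory spends positive-length time intervals in a fixed compact $\widetilde K$; sampling the process at times $\{nT\}_{n\ge1}$ for $T=T_1+T_2$ and conditioning on $(X_{nT},Y_{nT})\in\widetilde K$, each such epoch yields an independent-in-the-Markov-sense chance $\ge p_0$ of entering $G_\epsilon$ in the next window. Hence with probability one the box is entered infinitely often, which is exactly the asserted recurrence. The main obstacle I anticipate is making the "hold $Y$ small while $X$ comes down" step rigorous: one needs the control window $T_2$ long enough that the dominating CKLS process for $X$ is below $\epsilon$ with high probability, yet must simultaneously keep $Y$ in $(0,\epsilon)$ over that whole window, and the two requirements interact. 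I would resolve this by choosing $T_2$ first (depending only on $\epsilon$ and the compact $K$ through the $X$-comparison process), and then invoking the support theorem for $Y$ on the fixed interval $[0,T_2]$ with the small initial value $y_1$ — the probability of staying below $\epsilon$ is positive and can be bounded below because it depends continuously on $y_1\in(0,\epsilon/2]$.
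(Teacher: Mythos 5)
Your proposal has a genuine gap at the point you treat as routine: the claim that, because the diffusion is conservative, ``its trajectory spends positive-length time intervals in a fixed compact $\widetilde K\subset(0,\infty)^2$'' at arbitrarily large times. Conservativeness (no explosion in finite time, Theorem~\ref{theorsyst}) does not rule out transience: a priori the pair could drift off to infinity, or approach the axes and linger there, never re-entering a fixed compact of the \emph{open} quadrant after some time. Establishing exactly this return property is the core of the paper's argument and occupies two separate lemmas: a Lyapunov-function estimate (with $V(x,y)=y^2+k^2x^2$, giving $\mathcal A V\le -K_3$ outside a bounded region, hence $\ex\tau_{R_0}<\infty$ and a.s.\ return to the line $\{x+y=R_0\}$ from any starting point), and a comparison argument showing that from that line --- which touches the boundary of the quadrant --- the process enters an interior compact $K=\{x,y\ge\delta,\ |x+y-R_0|\le R_0/3\}$ with probability at least $\tfrac12$ within a fixed short window. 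Infinitely many such trials are then guaranteed by $\prob(\limsup_{t\to\infty}Y_t=+\infty)=1$, which supplies a.s.\ finite stopping times at which the process is back on the line; your scheme of sampling at deterministic times $nT$ and ``conditioning on $(X_{nT},Y_{nT})\in\widetilde K$'' cannot close the loop, because you have no a priori guarantee that this event occurs infinitely often --- that is essentially the statement to be proved. Your comparison/support-theorem steering step (2) controls how to get \emph{close to} the origin, which is the content of the paper's Theorem~\ref{thm:hitting}-type analysis, but it does nothing to keep the process from staying near the boundary or escaping to infinity, so it does not substitute for the Lyapunov lemma.

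A second, smaller problem is the reduction at the start: the theorem asserts visits to \emph{every} nonempty open $G\subset(0,\infty)^2$ (e.g.\ a ball around $(10,10)$), and visiting the corner boxes $(0,\epsilon)^2$ infinitely often does not imply this; the remark that any $G$ ``contains a translate of such a set after an obvious scaling of the verification'' is not an argument, and your steering construction is specifically tailored to pushing both coordinates down, not toward an arbitrary target. The paper handles arbitrary $G$ cleanly in its Lemma~\ref{lem:hit1}: once the process is known to hit a fixed interior compact $K$ a.s., strict positivity and continuity of the killed transition density (ellipticity of \eqref{eq:generator} on compacts of the open quadrant) give a uniform lower bound $\inf_{(x,y)\in K}\prob_{x,y}((X_t,Y_t)\in G)>0$, and a strong-Markov geometric-trials argument finishes. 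Your appeal to the Stroock--Varadhan support theorem plus a compactness/continuity upgrade could play the same role as that lemma, so this part is repairable; but as written the reduction is incorrect, and the decisive missing ingredient remains the a.s.\ return to a fixed interior compact.
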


\begin{theorem}\label{thm:hitting}
\leavevmode
\begin{enumerate}[1)] 
\item Assume that $\alpha_1=\frac12$, $\alpha_2\in[\frac12,1)$. Then for any $\epsilon>0$ and any starting point
\[
\prob(\forall t_0\, \exists t\geq t_0\colon  X_t=0, Y_t\in[0,\epsilon])=1.
\]
\item Assume that $\alpha_2=\frac12$, $\alpha_1\in[\frac12,1)$. Then for any $\epsilon>0$ and any starting point
\[
\prob(\forall t_0\, \exists t\geq t_0\colon  Y_t=0, X_t\in[0,\epsilon])=1.
\]
\end{enumerate}
\end{theorem}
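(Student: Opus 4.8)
The plan is to treat the two cases symmetrically: case 1 requires $X$ to hit zero (which is possible precisely because $\alpha_1=\tfrac12$ makes $0$ accessible for the $X$-equation) while $Y$ is simultaneously small; case 2 is the mirror image with the roles of $X$ and $Y$ swapped, but now it is $Y$ that hits zero (since $\alpha_2=\tfrac12$) while $X$ is small. In both cases I would reduce the "infinitely often" statement to a single "hitting with positive probability from a good configuration" statement, and then close the loop with the strong Markov property and a Borel--Cantelli-type argument along a sequence of stopping times, using that the process returns to a suitable neighbourhood infinitely often.

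More concretely, for case 1, I would first use Theorem~\ref{thm:access} (recurrence of $(X,Y)$) together with the auxiliary lemmas \ref{lem:hit1}--\ref{lem:enterR3} to guarantee that, for a suitably chosen small rectangle $R\subset(0,\delta)^2$ with $\delta\le\epsilon$, the process $(X,Y)$ enters $R$ at a sequence of stopping times $\tau_1<\tau_2<\dots$ tending to infinity a.s. The key step is then a local lemma: starting from any point $(x_0,y_0)\in R$, with probability bounded below by some $p=p(\delta)>0$, the process $X$ reaches $0$ before $Y$ exits $[0,\epsilon]$. To prove this I would freeze the influence of $Y$ by noting that on the event that $Y$ stays in $[0,\epsilon]$ up to the relevant time, the drift $a_1Y_t-b_1X_t$ of $X$ is dominated by $a_1\epsilon-b_1X_t$, so by the comparison theorem (Theorem~\ref{comp}) $X$ is dominated by a CIR-type process $\widetilde X$ with drift $a_1\epsilon-b_1\widetilde X$ and diffusion $\sigma_1\widetilde X^{1/2}$; choosing $\epsilon$ (equivalently the reflection level of $R$) small enough that $a_1\epsilon<\tfrac{\sigma_1^2}{2}$ puts $\widetilde X$ in the regime where $0$ is hit a.s. (cf.\ Proposition~\ref{prop:CIR} and Lemma~\ref{l:CIR-paths}(ii)), and a continuity/time-localization argument then gives that with positive probability $X$ hits $0$ quickly, before $Y$ — which starts small and has bounded coefficients on $[0,\epsilon]$ — has had time to leave $[0,\epsilon]$. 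Applying the strong Markov property at the times $\tau_k$ and the second Borel--Cantelli lemma to the independent-enough success events yields that the target event occurs for infinitely many $k$, which is exactly the claim.

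For case 2 the argument is structurally identical but slightly cleaner, because $Y$ evolves autonomously: by Lemma~\ref{l:CIR-paths} (with $a_2\ge\tfrac{\sigma_2^2}{2}$ giving recurrence, oscillating down to $0$, and with $a_2<\tfrac{\sigma_2^2}{2}$ giving $Y_t\to0$ a.s., and in the boundary-accessible subcase $0<a_2<\sigma_2^2/2$ the barrier at $0$ is strongly reflecting) the process $Y$ returns to a small neighbourhood of $0$ — indeed hits $0$ — infinitely often. I would pick stopping times $\sigma_k\uparrow\infty$ at which $Y_{\sigma_k}$ is small, and then use Theorem~\ref{thm:access} to ensure $X$ is simultaneously small with positive probability at a further return time; alternatively, condition on $Y$ and observe that on the event $\{Y_t\le\epsilon,\ t\in[\sigma_k,\sigma_k+h]\}$ the $X$-equation again has bounded coefficients, so $X$ started small stays in $[0,\epsilon]$ on a short interval with probability bounded below, during which $Y$ reaches $0$ with positive probability. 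The strong Markov property plus Borel--Cantelli then finish the proof.

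The main obstacle I anticipate is the joint control in case 1: one needs $X$ to actually reach $0$ (not merely become small) within a window short enough that $Y$, which genuinely influences $X$ through the $a_1Y_t$ term, has not escaped $[0,\epsilon]$. The comparison-theorem domination handles the "X reaches 0" half cleanly, but quantifying "quickly enough, with probability bounded below uniformly over starting points in $R$" requires a careful time-localization — e.g.\ showing $\prob_{(x_0,y_0)}(X \text{ hits }0 \text{ before time }h)\ge p_0>0$ for some fixed $h$ and then choosing $R$ small enough that $\prob(Y \text{ exits }[0,\epsilon]\text{ before }h\mid Y_0\le\delta)$ is small — and it is here that the precise choice of $\delta$ relative to $\epsilon$, $\sigma_1$, $a_1$ matters. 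Everything else (the recurrence input, the strong Markov reduction, the Borel--Cantelli step) is routine once this local estimate is in place.
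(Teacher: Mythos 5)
Your proposal follows essentially the same route as the paper: reduce WLOG to small $\epsilon$, use the recurrence of $(X,Y)$ (Theorem \ref{thm:access}) to enter a small rectangle near the origin infinitely often, dominate $X$ on the event $\{Y\le\epsilon\}$ by a process with constant drift $a_1\epsilon<\sigma_1^2/2$ and diffusion $\sigma_1\sqrt{\cdot}$ (the paper uses $d\bar X_t = a_1\epsilon\,dt+\sigma_1\sqrt{\bar X_t}\,dW_t$, you keep the $-b_1x$ term, which changes nothing), obtain a uniform-in-starting-point lower bound for the event that $X$ hits $0$ within a short window before $Y$ leaves $[0,\epsilon]$, and close with the strong Markov property (the paper phrases your Borel--Cantelli step as the geometric bound $1-2^{-n}$). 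The quantitative time-localization you flag as the main obstacle is exactly the paper's choice of $t_1$ and $\delta$ via continuity of the coefficients and the fact that the hitting time $\sigma_0^{(x)}$ of zero tends to $0$ in probability as $x\to 0+$.
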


\begin{remark}
    We don't know if the process hits the origin with positive probability if $\alpha_1=\alpha_2=\frac{1}{2}.$
\end{remark}

\begin{remark}
If $\alpha_1\in(\frac12,1]$, then $\prob(\exists t>0 \colon X_t=0)=0$, and similarly, if $\alpha_2\in(\frac12,1]$, then $\prob(\exists t>0 \colon Y_t=0)=0$. 
The statement for $Y$ follows directly from Feller's test for explosions \cite[Theorem 5.29, p.\ 348]{karatzas-shreve}.
The result for $X$ is obtained by comparing it with the solution of an equation with $a_1 = 0$, using Theorem~\ref{th:comparison}, followed by an application of Feller's test.
\end{remark}

\begin{lemma}\label{lem:hit1}
    Assume that there exists a compact set $K\subset (0,\infty)^2$ such that $(X,Y)$ visits $K$ with probability 1 for any starting point:
\begin{equation}
    \forall (x_1,y_1)\in [ 0,\infty)^2\quad \prob_{x_1,y_1}(\exists t\geq 0\colon (X_t, Y_t)\in K)=1.
    \label{eq:hittingK}
\end{equation}
Then the statement of the Theorem \ref{thm:access} holds true.
\end{lemma}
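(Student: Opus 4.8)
The plan is to combine the return hypothesis \eqref{eq:hittingK} with an interior irreducibility estimate and the strong Markov property of $(X,Y)$ (Theorem~\ref{theorsyst}), in the spirit of Khasminskii's recurrence criterion. The whole statement reduces to the following uniform hitting bound: given a nonempty open $G\subset(0,\infty)^2$, there exist $T>0$ and $\delta>0$ with
\begin{equation}
\inf_{(x,y)\in K}\prob_{x,y}\bigl(\exists\,t\in[0,T]\colon(X_t,Y_t)\in G\bigr)\ge\delta .
\tag{$\star$}
\end{equation}
Indeed, assuming $(\star)$, I would set $\sigma_1:=\inf\{t\ge0\colon(X_t,Y_t)\in K\}$ and, recursively, $\sigma_{k+1}:=\inf\{t\ge\sigma_k+T\colon(X_t,Y_t)\in K\}$. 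Since $(X_t,Y_t)\in[0,\infty)^2$ for all $t$ by Theorem~\ref{theorsyst}, applying \eqref{eq:hittingK} to the process restarted at the finite stopping time $\sigma_k+T$ shows inductively that each $\sigma_k$ is a.s.\ finite, and clearly $\sigma_k\ge(k-1)T\to\infty$. Let $A_k$ be the event that $(X,Y)$ visits $G$ during $[\sigma_k,\sigma_k+T]$. As $K$ is closed, $(X_{\sigma_k},Y_{\sigma_k})\in K$, so the strong Markov property at $\sigma_k$ and $(\star)$ give $\prob(A_k\mid\mathcal F_{\sigma_k})\ge\delta$; since $A_j\in\mathcal F_{\sigma_j+T}\subset\mathcal F_{\sigma_k}$ for $j<k$, iterating this bound yields $\prob\bigl(\bigcap_{j=N}^{N+m}A_j^c\bigr)\le(1-\delta)^{m+1}$ for all $N\ge1$, $m\ge0$, whence $\prob\bigl(\bigcap_{j\ge N}A_j^c\bigr)=0$. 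Thus a.s.\ infinitely many $A_k$ occur, and because $\sigma_k\to\infty$ this means that $\{t\ge0\colon(X_t,Y_t)\in G\}$ is a.s.\ unbounded, which is exactly the conclusion of Theorem~\ref{thm:access}.

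To prove $(\star)$, I would fix $T>0$ and a closed ball $\overline B=\overline B(z_0,r)\subset G$ with $z_0\in(0,\infty)^2$. On the open connected set $(0,\infty)^2$ the coefficients of \eqref{eq:dX}--\eqref{eq:dY} are smooth and the diffusion is non-degenerate there --- or, in the degenerate case where the Wiener processes driving $X$ and $Y$ coincide up to sign, hypoelliptic, since the drift together with the noise vector field satisfies H\"ormander's bracket condition on $(0,\infty)^2$ --- so on $(0,\infty)^2\times(0,\infty)^2$ its transition kernel has a jointly continuous, strictly positive density $p_T$. Then $h(z):=\prob_z\bigl((X_T,Y_T)\in\overline B\bigr)=\int_{\overline B}p_T(z,w)\,dw$ is a continuous, strictly positive function of $z$ on the compact $K$, so $\delta:=\inf_{z\in K}h(z)>0$; since $\prob_z\bigl(\exists\,t\in[0,T]\colon(X_t,Y_t)\in G\bigr)\ge\prob_z\bigl((X_T,Y_T)\in\overline B\bigr)=h(z)\ge\delta$ for $z\in K$, we obtain $(\star)$.

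The main obstacle is precisely this last step, the irreducibility of $(X,Y)$ inside $(0,\infty)^2$. Note that \eqref{eq:hittingK} has already confined all the work to the interior of the first quadrant --- one only ever needs to move between the compacts $K$ and $\overline B\subset G$ --- so what remains is a genuine positivity/support statement for a smooth diffusion on a connected open set. The only delicate point is the possibly degenerate covariance when the drivers of $X$ and $Y$ are perfectly correlated, which forces the use of a H\"ormander-type condition (equivalently, the Stroock--Varadhan support theorem together with controllability of the associated control ODE on $(0,\infty)^2$) in place of plain uniform ellipticity.
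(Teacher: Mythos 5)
Your proof has the same skeleton as the paper's: a uniform minorization (from every point of $K$ the process enters $G$ within a fixed time $T$ with probability at least $\delta>0$) combined with a geometric-trials argument based on \eqref{eq:hittingK}, the strong Markov property, and the stopping times $\sigma_{k+1}=\inf\{t\ge \sigma_k+T:(X_t,Y_t)\in K\}$; this is exactly the paper's bound $\alpha(t^*)>0$ and its sequence $\sigma_n$. Your iteration even delivers the full recurrence statement of Theorem~\ref{thm:access} directly (infinitely many successful windows with $\sigma_k\to\infty$), whereas the paper's lemma yields hitting probability one and the upgrade to recurrence is carried out in the proof of Theorem~\ref{thm:access}; this part of your argument is fine.

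The one step that is not off-the-shelf as you state it is the claim that the transition kernel of $(X,Y)$ admits a jointly continuous, strictly positive density on $(0,\infty)^2\times(0,\infty)^2$. Interior smoothness and non-degeneracy do not immediately give this for the \emph{full} process, because the diffusion coefficients $\sigma_1x^{\alpha_1}$, $\sigma_2y^{\alpha_2}$ degenerate (and are non-Lipschitz) at the boundary, which the process can actually reach (for instance $X$ hits $0$ when $\alpha_1=\frac12$). The paper's device is precisely the localization you skipped: it fixes a compact $F\subset(0,\infty)^2$ with smooth boundary containing $K$ and $\overline{G}$ and works with the density of the process killed at $\partial F$, i.e.\ the Dirichlet heat kernel, for which joint continuity and strict positivity are classical since the coefficients are smooth and elliptic on $F$. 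This weaker object already suffices for your argument: $h(z)=\prob_z\bigl((X_T,Y_T)\in\overline B\bigr)$ is bounded below by the corresponding probability for the killed process, whose infimum over $z\in K$ is positive, so no regularity of the global density is needed. Finally, the perfectly correlated case you discuss lies outside what the paper treats (the generator \eqref{eq:generator} has no mixed second-order term, i.e.\ degenerate correlation of $W$ and $B$ is tacitly excluded), and your assertion that H\"ormander's bracket condition holds on all of $(0,\infty)^2$ for every admissible choice of parameters is itself unverified; if you wish to cover that case, the bracket (or controllability) condition must be checked explicitly.
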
 
\begin{proof}
Let  $F\subset (0,\infty)^2$ be a compact set with smooth boundary such that $K, G^0\subset F^0$, where $A^0$ is the interior of a set $A$. 
It is well known that   the transition probability density function $p_t \bigl((x_1,y_1),(x_2,y_2)\bigr) $ of  $(X_t, Y_t)$  killed at the boundary $\partial F$   is a fundamental solution to the Dirichlet problem
\[
\begin{cases}
\partial_t u = \mathcal A u,
\\
u|_{\partial G}  = 0,  
\end{cases}
\]
where 
\begin{equation}\label{eq:generator}
\mathcal A = (a_1 y - b_1 x) \frac{\partial}{\partial x} + (a_2 - b_2 y)\frac{\partial}{\partial y} + \frac12 \sigma_1^2 x^{2\alpha_1} \frac{\partial^2}{\partial x^2} + \frac12 \sigma_2^2 y^{2 \alpha_2} \frac{\partial^2}{\partial y^2}
\end{equation}
is the generator of the diffusion process $(X,Y)$.
Since the coefficients of the equation are 
infinitely differentiable and satisfy the ellipticity property in $K$, for any fixed $t>0$ the function $p_t \bigl((x_1,y_1),(x_2,y_2)\bigr) $ is continuous in $(x_1, y_1)$, $(x_2, y_2)$ and strictly positive whenever both arguments lie in $F^0$.
Hence, for any fixed $t>0$
\[
\inf_{(x_1,y_1)\in K,\, (x_2,y_2)\in G }p_t \bigl((x_1,y_1),(x_2,y_2)\bigr)>0
\]
and consequently
 \begin{equation}\label{eq:prob_separ}
\alpha(t)\coloneqq\inf_{(x_1,y_1)\in K}\prob_{x_1,y_1} \bigl((X_t,Y_t)\in G\bigr)>0.     
 \end{equation}
Fix an arbitrary $t^*>0$ and introduce stopping times
\[
\sigma_0 \coloneqq 0,\quad \sigma_{n+1} \coloneqq \inf\set{s\geq \sigma_n+t^*: (X_s,Y_s)\in K}.
\]
It follows from \eqref{eq:hittingK} that $\sigma_n<\infty$ a.s.\ for all $n\geq 1$. The strong Markov property of $(X,Y)$ and \eqref{eq:prob_separ} imply that
\[
\prob \bigl(\exists s\in [0,\sigma_n]\colon (X_s,Y_s)\in G\bigr)
\geq 1 - \bigl(1 - \alpha(t_*)\bigr)^n \to 1, \quad n \to \infty. \qedhere
\]
\end{proof}

The following lemma shows the process almost surely enters a sufficiently large compact.
\begin{lemma}\label{l:R0}
There exists a constant $R_0 > 0$ such that for all initial conditions $x$ and $y$,
\begin{equation}\label{eq:hitsline}
\prob_{x,y} \left(\exists t \ge 0\colon X_t + Y_t = R_0 \right) = 1.
\end{equation}
\end{lemma}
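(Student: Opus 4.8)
The plan is to show that the sum $Z_t \coloneqq X_t + Y_t$ cannot remain forever below a large level $R_0$; equivalently, the process $(X,Y)$ cannot stay trapped in the bounded set $\{(x,y)\in[0,\infty)^2 : x+y < R_0\}$. First I would compute the dynamics of $Z_t$. Adding \eqref{eq:dX} and \eqref{eq:dY} gives
\[
dZ_t = \bigl(a_1 Y_t - b_1 X_t + a_2 - b_2 Y_t\bigr)\,dt + \sigma_1 X_t^{\alpha_1}\,dW_t + \sigma_2 Y_t^{\alpha_2}\,dB_t.
\]
On the event that the process never hits the level $R_0$, we have $0 \le X_t, Y_t \le R_0$ for all $t$, so the drift of $Z_t$ is bounded below by $a_2 - b_1 R_0 - b_2 R_0 + (a_1 - b_2)^- (\text{bounded})$; more crudely, the drift lies in some fixed interval $[-M, M]$ with $M$ depending only on the coefficients and $R_0$. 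The key qualitative point is that near the corner of the trapping region the drift is \emph{not} pushing inward with enough strength to confine a diffusion: $Y$ has a strictly positive drift contribution $a_2$ near $y = 0$, and the diffusion coefficients, while degenerate at the axes, are genuinely nondegenerate in the interior. So intuitively the process must escape.

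The cleanest route I would take is a Lyapunov / supermartingale argument. Suppose, for contradiction, that $\prob_{x,y}(\tau = \infty) > 0$ where $\tau \coloneqq \inf\{t \ge 0 : X_t + Y_t = R_0\}$. On $\{\tau = \infty\}$ the process stays in the compact set $\overline{D} \coloneqq \{x,y \ge 0,\ x+y \le R_0\}$ for all time. Now apply the ergodic-type bound: from Lemma~\ref{lem-2-2} (or the linear/CIR analogues) we know $\ex Y_t \to a_2/b_2 > 0$, and in fact $\ex Y_t$ is bounded away from zero uniformly in $t$ by Remark~\ref{rem:EY-monot}. On the other hand, $Y$ killed upon staying inside $\overline{D}$ satisfies $Y_t \le R_0$ there. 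This alone does not contradict anything, so instead I would look directly at the growth of $\ex Z_{t\wedge\tau}$. Taking expectations in the SDE for $Z$ up to time $t\wedge\tau$,
\[
\ex Z_{t\wedge\tau} = Z_0 + \ex\int_0^{t\wedge\tau}\bigl(a_1 Y_s - b_1 X_s + a_2 - b_2 Y_s\bigr)\,ds.
\]
On $\{s < \tau\}$ we have $X_s + Y_s < R_0$, and here is the mechanism: choose $R_0$ large enough that the drift, restricted to the region where $X_s$ is close to $R_0$ (so $X_s \ge R_0/2$, say), is strongly negative — but that is the wrong direction. The right observation is subtler: the process cannot be absorbed because $0$ is not absorbing for either coordinate (for $X$ it follows from the comparison argument in Theorem~\ref{theorsyst}; for $Y$ from Propositions~\ref{prop:CKLS}, \ref{prop:CIR}), and in the interior the diffusion is irreducible with a nondegenerate noise, so it cannot avoid exiting a bounded set.

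Given the machinery already set up in the excerpt, I expect the intended proof is short: apply \cite[Theorem 3.1, Chapter VI]{ikeda-watanabe}-type reasoning, or more directly a standard recurrence criterion for nondegenerate diffusions, to the scalar (semimartingale) $Z_t = X_t + Y_t$, or better, use the fact that on any bounded domain $D \subset (0,\infty)^2$ with smooth boundary the generator $\mathcal A$ from \eqref{eq:generator} is uniformly elliptic, hence the exit time from $D$ is a.s.\ finite (and has finite expectation) for every starting point in $\overline D$; this is classical. To conclude, I would take a bounded smooth domain $D$ containing the trapping region $\{x+y < R_0\}\cap(0,\infty)^2$ in its closure, note that if the process started in $\overline D$ never hit $\{x+y = R_0\}$ it would have to either stay in $D$ forever (impossible by finiteness of the exit time of a uniformly elliptic diffusion from $D$) or exit $D$ through the axes — but the axes are locally repelling in the relevant sense (the $Y$-drift $a_2 \ge 0$ plus nondegenerate diffusion off the axis, handled via Theorem~\ref{theorsyst}'s comparison and Feller's test), so the only way out of $\overline D$ is across $\{x+y = R_0\}$. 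This forces \eqref{eq:hitsline}.

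The main obstacle is handling the boundary behavior at the axes cleanly, where the diffusion \emph{is} degenerate, so one cannot simply quote uniform ellipticity on all of $\overline D$. I would circumvent this by working with the comparison processes already introduced: $X$ is dominated below by $0$ and $Y$ by $0$, and — crucially — one can bound $Y$ below by a CKLS/CIR process with $a_2$ replaced by a smaller positive constant (or by $0$), whose recurrence/excursion behavior near $0$ is known from Section~\ref{sec-auxiliary}; combined with the fact that $Z$ has drift bounded away from $-\infty$ and nondegenerate quadratic variation whenever $(X,Y)$ is bounded away from the origin, a one-dimensional scale-function computation for $Z_t$ (treating its quadratic variation $\sigma_1^2 X_t^{2\alpha_1} + \sigma_2^2 Y_t^{2\alpha_2} + 2\rho\cdots$ as comparable to a constant on the relevant time intervals) yields that $Z$ must reach $R_0$. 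The choice of $R_0$ enters only to make the drift of $Z$ point outward (or at worst be bounded) on the trapping region, which is automatic once $R_0 > (a_1 Y + a_2)/b_1$ type estimates are arranged, using that $Y$ is bounded on the trapping region.
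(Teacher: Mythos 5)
There is a genuine gap: your argument addresses only starting points below the level, i.e.\ $x+y<R_0$, while the lemma asserts \eqref{eq:hitsline} for \emph{all} initial conditions, and the substantive case is $x+y>R_0$ with $x+y$ arbitrarily large. Your contradiction setup --- ``on the event that the process never hits the level $R_0$ we have $0\le X_t,Y_t\le R_0$ for all $t$'' --- is false for such starting points: on $\{\tau=\infty\}$ the process then stays \emph{above} the line $\{x+y=R_0\}$, which is an unbounded region, so no exit-time or ellipticity argument on a bounded domain applies. What actually has to be proved is that the mean-reverting drift brings the process \emph{down} to the level set from arbitrarily far away. The paper does this with a Lyapunov function $V(x,y)=y^2+k^2x^2$ with $k>0$ small, for which $\mathcal A V(x,y)\le -K_3<0$ on $\{x+y\ge R_0\}$ once $R_0$ is large; It\^o's formula together with $V\ge 0$ then gives $\ex\tau_{R_0}\le V(x,y)/K_3<\infty$. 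Nothing in your proposal produces such an estimate, and your closing remark that ``the choice of $R_0$ enters only to make the drift of $Z$ point outward'' has the role of $R_0$ backwards: $R_0$ is chosen so that \emph{outside} the sublevel set the drift is strongly inward (negative for $V$).

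Even for the case $x+y\le R_0$, which is the only one you treat, your route is shakier than necessary. Uniform ellipticity fails on the axes, as you acknowledge, and your fallback --- a one-dimensional scale-function computation for $Z_t=X_t+Y_t$ with quadratic variation ``comparable to a constant'' --- breaks down near the corner $(0,0)$, where both $\sigma_1^2X_t^{2\alpha_1}$ and $\sigma_2^2Y_t^{2\alpha_2}$ vanish; Theorem~\ref{thm:access} shows the process does visit arbitrarily small neighbourhoods of the origin, so $Z$ cannot be treated there as a nondegenerate scalar diffusion without further quantitative work. The paper disposes of this case in one line: since $\prob(\limsup_{t\to\infty}Y_t=+\infty)=1$ for the autonomous process $Y$ (Section~\ref{sec-auxiliary}), one has $X_t+Y_t\ge Y_t>R_0$ at some time, and by continuity of the paths the sum must pass through the level $R_0$. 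Your proposal would be repaired by adopting exactly this split: the trivial recurrence-of-$Y$ observation below the level, and a Lyapunov (negative-drift-at-infinity) bound above it.
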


\begin{proof}
If $x + y \le R_0$ for some parameter $R_0>0$, then equality \eqref{eq:hitsline} is obvious because $\prob(\limsup_{t\to\infty} Y_t=+\infty)=1$. Hence, it suffices to show existence of $R_0$ such that \eqref{eq:hitsline} 
is satisfied for all $x,y\geq 0$ such that $x + y > R_0$.

The proof employs the Lyapunov function method.
Fix $k>0$ and define the Lyapunov function
\[
V(x,y) = y^2 + k^2 x^2.
\]
Let as before $\mathcal A$ denote the infinitesimal generator of the two-dimensional SDE system \eqref{eq:dX}--\eqref{eq:dY}, given by \eqref{eq:generator}. Applying 
$\mathcal A$ to $V$, we get:
\begin{align*}
\mathcal A V(x,y) &= 2 y (a_2 - b_2 y) + \sigma_2^2 y^{2\alpha_2} + 2 k^2 x (a_1 y - b_1 x) + \sigma_1^2 k^2 x^{2\alpha_1}
\\
&= - 2 b_2 y^2 - 2 b_1 k^2 x^2 + 2 a_1 k^2 x y + o (V(x,y)), \quad |x|+|y|\to\infty.
\end{align*}
Using the elementary inequality 
$2\alpha\beta \le \alpha^2 + \beta^2$, and choosing
$k>0$ sufficiently small, we obtain
\begin{align*}
2 b_2 y^2 + 2 b_1 k^2 x^2 - 2 a_1 k^2 x y
&\ge 2 b_2 y^2 + 2 b_1 (kx)^2 - k a_1 \left((kx)^2 + y^2\right)
\\
&\ge K_1 \left((kx)^2 + y^2\right)
= K_1 V(x,y).
\end{align*}
for some $K_1 > 0$.

Therefore, there exist constants $K_1, K_2, K_3 > 0$ and $R_0 > 0$ such that for all
$x,y \ge 0$, $x+y\ge R_0$ 
\[
\mathcal A V(x,y) \le K_2 - K_1 V(x,y) \le - K_3.
\]

Define the stopping time
\[
\tau_{R_0} \coloneqq \inf\set{t \ge 0: X_t + Y_t = R_0}.
\]
By the It\^o formula, we have for any $x,y \ge 0$, $x+y\ge R_0$,
\[
0 \le \ex V\left(X_{\tau_{R_0} \wedge t}, Y_{\tau_{R_0} \wedge t}\right)
= V(x,y) + \int_0^{\tau_{R_0} \wedge t} \ex \mathcal A V(X_s,Y_s)\,ds
\le V(x,y) - K_3 \ex (\tau_{R_0} \wedge t).
\]
Letting $t\to\infty$, we get 
\[
0 \le V(x,y) - K_3 \ex \tau_{R_0},
\]
which implies $\ex \tau_{R_0} < \infty$.
Hence,
\[
\tau_{R_0} < \infty \quad \text{a.s.} \qedhere
\]
\end{proof}

Note that the set 
$\{(x,y)\in\mathbb{R} : x, y\geq0,\, x+y=R_0\}\not\subset (0,+\infty)^2$,
and therefore Lemma~\ref{lem:hit1} cannot yet be applied to this set to establish Theorem~\ref{thm:access}.
However, in the next statement, we identify a compact set that the process visits with probability greater than $1/2$.
\begin{lemma}\label{lem:enterR3}
    There is $\delta > 0$ such that 
    \[
     \inf_{x,y\geq 0, \, x+y=R_0}\prob_{x,y} \left(\exists t \ge 0\colon |X_t+Y_t-R_0|\leq \frac{R_0}{3},\,  X_t\geq \delta,\, Y_t\geq \delta\right) \geq \frac{1}{2}.
    \]
    Here $R_0$ is from Lemma~\ref{l:R0}.
\end{lemma}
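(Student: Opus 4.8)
\emph{Plan of proof.} My plan is to reduce the claim to two ``corner'' arcs of $L$ and control each by a short‑time estimate. Write $L=\set{(x,y)\in[0,\infty)^2:x+y=R_0}$ and let $D_\delta=\set{(x,y):\abs{x+y-R_0}\le R_0/3,\ x\ge\delta,\ y\ge\delta}$ be the target set. If $(x,y)\in L$ has $x\ge\delta$ and $y\ge\delta$, then $(x,y)\in D_\delta$ and the event in the statement already holds at $t=0$; so I would fix $\delta<R_0/2$ and reduce to starting points with $x<\delta$ (an arc near $(0,R_0)$) or $y<\delta$ (an arc near $(R_0,0)$), which are disjoint. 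The two arcs are symmetric under interchanging $X$ and $Y$, so I describe the first one, with $x<\delta$ and $y=R_0-x\in(R_0-\delta,R_0)$.

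For a small horizon $\eta>0$ I would work with the events
\[
G=\set{Y_t\ge\tfrac{R_0}{2}\ \forall\,t\le\eta},\quad
H=\set{\abs{X_t+Y_t-R_0}\le\tfrac{R_0}{3}\ \forall\,t\le\eta},\quad
J=\set{\exists\,t\le\eta:\ X_t\ge\delta}.
\]
On $G\cap H\cap J$, if $\tau^{*}\le\eta$ is the first time $X$ reaches $\delta$, then $X_{\tau^{*}}\ge\delta$, $Y_{\tau^{*}}\ge R_0/2\ge\delta$ and $\abs{X_{\tau^{*}}+Y_{\tau^{*}}-R_0}\le R_0/3$, i.e.\ $(X_{\tau^{*}},Y_{\tau^{*}})\in D_\delta$; so it is enough to bound each of $\prob(G^{c}),\prob(H^{c}),\prob(J^{c})$ by $\tfrac16$, uniformly over the arc. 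For $G$ and $H$ I expect only routine estimates: stopping at the first exit of a fixed compact neighbourhood of $(0,R_0)$ in $[0,\infty)^2$ (and, for $H$, also at the first time $\abs{X_t+Y_t-R_0}=R_0/3$) keeps all coefficients of \eqref{eq:dX}--\eqref{eq:dY} bounded, and since $Y_0>R_0-\delta$ and $X_0+Y_0=R_0$ start well inside, the deterministic drift bound together with Doob's inequality applied to the continuous martingale parts should give $\prob(G^{c})+\prob(H^{c})\le C(R_0)\,\eta$, hence $\le\tfrac16$ for $\eta$ small, uniformly over the arc and over $\delta<R_0/2$.

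For the event $J$ I would use the hypothesis $a_1>0$. On $G$ the drift of $X$ at time $t\le\eta$ satisfies $a_1Y_t-b_1X_t\ge\tfrac{a_1R_0}{2}-b_1X_t$, so by the comparison theorem of the Appendix (Theorem~\ref{comp}) — applied after replacing $Y_t$ by $\max(Y_t,R_0/2)$ and invoking pathwise uniqueness on $G$ — one gets $X_t\ge\xi_t$ on $G$ for $t\le\eta$, where
\[
d\xi_t=\Bigl(\tfrac{a_1R_0}{2}-b_1\xi_t\Bigr)dt+\sigma_1\xi_t^{\alpha_1}\,dW_t,\qquad \xi_0=x\ge0,
\]
and then $\xi_t\ge\xi^{0}_t$ with $\xi^{0}$ the same equation started from $0$. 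The key observation is that $\xi^{0}$ cannot stay at $0$: if $\xi^{0}_s=0$ for all $s\le\eta$, the drift term alone forces $\xi^{0}_\eta=\tfrac{a_1R_0}{2}\eta>0$, a contradiction. Hence $\prob\bigl(\sup_{s\le\eta}\xi^{0}_s>0\bigr)=1$, and since $\set{\sup_{s\le\eta}\xi^{0}_s\ge\delta}$ increases to $\set{\sup_{s\le\eta}\xi^{0}_s>0}$ as $\delta\downarrow0$, I obtain $\prob(J)\ge\prob\bigl(\sup_{s\le\eta}\xi^{0}_s\ge\delta\bigr)\to1$ as $\delta\to0$ for the fixed $\eta$. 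Choosing $\eta$ first and then $\delta<R_0/3$ small enough then makes all three complements $\le\tfrac16$, so $\prob_{x,y}(\cdot)\ge\tfrac12$ on this arc. The arc near $(R_0,0)$ is handled the same way with $X$ and $Y$ swapped; since \eqref{eq:dY} is autonomous in $Y$, the analogue of $J$ — that $Y$ reaches $\delta$ by time $\eta$ — follows directly from comparing $Y$ with its version started at $0$ and using $a_2>0$. Together with the trivial ``central'' case this proves the lemma.

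I expect the only genuine difficulty to be the event $J$, i.e.\ making ``escape from the corner to level $\delta$ within a \emph{fixed} short time'' have probability close to $1$ as $\delta\to0$, uniformly in the starting point. A direct second‑moment estimate on $X$ near $0$ is too crude when $\alpha_1=\tfrac12$ (the martingale bracket and the distance to be covered are then of the same order), so one is forced to pass to the comparison process $\xi^{0}$ and exploit the purely qualitative fact that a strictly positive immigration term keeps it off the origin — which is precisely why $a_1>0$ (and, symmetrically, $a_2>0$) is needed here.
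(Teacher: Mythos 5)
Your proposal is correct and takes essentially the same route as the paper's proof: reduce to the two boundary arcs of $\{x+y=R_0\}$, confine the process near that line over a short fixed horizon by a uniform small-time estimate, and compare the small coordinate with an auxiliary process started at $0$ whose strictly positive constant drift ($\tfrac{a_1R_0}{2}$ for $X$, $a_2$ for $Y$) prevents it from remaining at zero, so that it exceeds $\delta$ within the fixed horizon with probability close to $1$ as $\delta\downarrow 0$. The only deviations are bookkeeping (your events $G,H,J$ with $1/6$-bounds versus the paper's stopping time $\tau$ and $1/4$-bounds; also your bound $\prob(J)\ge\prob\bigl(\sup_{s\le\eta}\xi^{0}_s\ge\delta\bigr)$ formally requires intersecting with $G$, which your union bound absorbs).
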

\begin{proof}
We will select sufficiently small $\delta$ such that $\delta\in(0,\frac{R_0}{3})$. Then to prove the Lemma it suffices to show that
\[
     \inf_{ x+y=R_0,\, y\in[0,\delta]}\prob_{x,y} \left(\exists t \ge 0\colon |X_t+Y_t-R_0|\leq \frac{R_0}{3},\,   Y_t\geq \delta\right) \geq \frac{1}{2},
    \]
and
\[
\inf_{ x+y=R_0,\, x\in[0,\delta]}\prob_{x,y} \left(\exists t \ge 0\colon |X_t+Y_t-R_0|\leq \frac{R_0}{3},\,  X_t\geq \delta \right) \geq \frac{1}{2},
\]
Since coefficients of the system \eqref{eq:dX}--\eqref{eq:dY} are continuous, they are bounded on compact sets. Therefore, there exists a sufficiently small $t_0$ such that
\begin{align*}
\MoveEqLeft
\sup_{x,y\geq 0, \, x+y=R_0}\prob_{x,y}\left( \sup_{t\in[0,t_0]} |X_t+Y_t-R_0|> \frac{R_0}{3} \right)
\\*
&= \sup_{x,y\geq 0, \, x+y=R_0}\prob_{x,y}\left( \sup_{t\in[0,t_0]} |X_t + Y_t - (X_0 + Y_0)|> \frac{R_0}{3} \right)
\\
&\leq  \sup_{x,y\geq 0, \, x+y=R_0}\prob_{x,y}\left( \sup_{t\in[0,t_0]} |X_t - X_0|> \frac{R_0}{6},\; \sup_{t\in[0,t_0]} |Y_t -Y_0|> \frac{R_0}{6} \right) < \frac{1}{4}.
\end{align*}
Set $\tau\coloneq\inf\{t\geq 0 : \sup_{t\in[0,t_0]} |X_t - X_0|> \frac{R_0}{6},\; \sup_{t\in[0,t_0]} |Y_t - Y_0|> \frac{R_0}{6}\}$.
By the comparison theorem  
\[ 
\prob(Y_t\geq \bar Y_t,\, t\geq 0)=1,
\]
where $\bar Y$ is the solution of \eqref{eq:dY}, started from 0. Hence
\begin{align*}
\MoveEqLeft[3]\inf_{ x+y=R_0,\, y\in[0,\delta]}\prob_{x,y} \left(\exists t \ge 0\colon |X_t+Y_t-R_0|\leq \frac{R_0}{3},\,   Y_t\geq \delta\right)\\
    & \geq  \inf_{ x+y=R_0,\, y\in[0,\delta]}\prob_{x,y} \left(\exists t \in[0,\tau]\colon  Y_t\geq \delta\right) 
    \\
    &\geq 
    \inf_{x,y\geq 0,\, x+y=R_0 }\prob_{x,y} \left(\{\tau\ge t_0\}\cap\{\exists t\in[0,t_0]\colon   \bar Y_t\geq \delta\}\right)
     \\
     &\geq \frac34 -\prob\left(  \exists t\in[0,t_0]\colon  \bar Y_t\geq \delta\right).
\end{align*}
For any $t_0>0$ 
\[
\prob(\bar Y(t)=0,\, t\in[0,t_0])=1.
\]
Hence, for sufficiently small $\delta>0$
\[
\prob\left(\forall t \in[0,t_0]\colon  \bar Y_t< \delta\right)< \frac{1}{4},
\]
and, consequently, 
\[
\inf_{ x+y=R_0,\, y\in[0,\delta]}\prob_{x,y} \left(\exists t \ge 0\colon |X_t+Y_t-R_0|\leq \frac{R_0}{3},\,   Y_t\geq \delta\right)\ge \frac34-\frac14=\frac12.
\]

Note that for any starting point $(x,y)$,  $x+y=R_0$,
\[
\prob(\forall t\in[0,\tau]\colon   X_t\geq  \bar X_t)=1,
\]
$\bar X$ is a solution to the equation
\[
d\bar X_t  = \left(a_1 \frac{5R_0}{6} - b_1 \bar X_t\right) dt + \sigma_1 \bar X_t^{\alpha_1}\,dW_t,
\]
 with the initial condition $\bar X_0 = 0$. Similarly to the previous calculations,
\begin{align*}
\MoveEqLeft[3]\inf_{ x+y=R_0,\, x\in[0,\delta]}\prob_{x,y} \left(\exists t \ge 0\colon |X_t+Y_t-R_0|\leq \frac{R_0}{3},\,   X_t\geq \delta\right)
\\
&\geq  \inf_{ x+y=R_0,\, y\in[0,\delta]}\prob_{x,y} \left(\exists t \in[0,\tau]\colon  X_t\geq \delta\right)-\frac14
\\
&\geq  \inf_{x,y\geq 0,\, x+y=R_0 }\prob_{x,y} \left(\exists t \in[0,\tau]\colon  \bar X_t\geq \delta\right)-\frac14\geq \frac12,
\end{align*} 
for sufficiently small $\delta>0.$
  This concludes the proof. 
\end{proof}

\begin{proof}[Proof of Theorem \ref{thm:access}]
Since
 \begin{align*}
 \MoveEqLeft
    \prob(\forall t_0\, \exists t\geq t_0 \colon (X_t,Y_t)\in G)=
\lim_{t_0\to\infty}\prob(\exists t\geq t_0 \colon (X_t,Y_t)\in G)\\
&=\lim_{t_0\to\infty}\int_0^\infty\!\!\int_0^\infty\prob_{x,y}( \exists s\ge 0 \colon (X_s, Y_s)\in G)\prob_{X_{t_0},Y_{t_0}}(dx,dy),
 \end{align*}
 to prove the theorem it suffices to verify that
 \[
 \forall x,y\ge 0\colon\quad \prob_{x,y}( \exists t\ge 0\colon (X_t, Y_t)\in G)=1.
 \]
  In order to check this condition, we will   apply Lemma \ref{lem:hit1} with  
$K\coloneqq\{(x,y)\colon x,y\geq \delta$, $|x+y-R_0|\leq \frac{R_0}{3}\} $. 
 Here $R_0, \delta$ are from Lemma \ref{lem:enterR3}.
Introduce stopping times $\tau_n$, $\sigma_n$ as follows:
$\tau_0 = \sigma_0 \coloneqq 0$,
\begin{gather*}
\tau_{n+1}\coloneqq\inf\{t\geq \sigma_n:  X_t+Y_t=R_0 \},
\\
\sigma_{n+1}\coloneqq\inf\{t\geq \tau_{n+1}+t_0: Y_t=2R_0\},
\end{gather*}
where $t_0$ is from Lemma \ref{lem:enterR3}.

Recall that 
\[
\prob\left(\limsup_{t\to\infty} Y_t = +\infty\right)=1,
\]
whence  $\tau_n, \sigma_n $ are finite a.s.
We will say that ``success'' occurs, if for some $t \in [\sigma_k,\sigma_{k+1}]$ the process $(X,Y)$ hits the compact set $K$. As we have shown in Lemma \ref{lem:enterR3}, the probability of ``success'' is not less than $\frac12$. Hence, the probability of not hitting $K$ even once on the time interval $[0,\sigma_n]$ does not exceed $\frac{1}{2^n}$. Thus
\[
\prob(\exists t \in [0, \sigma_n] \colon  (X_t,Y_t)\in K)\geq 1-\frac{1}{2^n},
\]
therefore 
\[
\prob(\exists t \geq 0 \colon  (X_t,Y_t)\in K)=1.
\]
The application of Lemma \ref{lem:hit1} completes the proof.
\end{proof}
\begin{proof}[Proof of Theorem \ref{thm:hitting}]
We will prove only the first part of the theorem; the proof of the second part is similar (but easier).  Assume that $\alpha_1=\frac12$, $\alpha_2\in[\frac12,1)$, and $\epsilon\in(0,1)$. 

Since coefficients of \eqref{eq:dX}--\eqref{eq:dY} are continuous, we can find $t_1 > 0$ small enough such that
    \[
     \sup_{x,y\in[0,2]}\prob_{x,y} \left( \sup_{t\in[0,t_1]}(|X_t-x|+| Y_t-y|)> \frac{\epsilon}{2} \right) < \frac{1}{4}.
    \]
Let us introduce a quadratic Bessel process $\bar X^{(x)}$ satisfying  the equation
\begin{equation}\label{eq:bessel}
d\bar X^{(x)}_t  =   a_1\epsilon \,dt + \sigma_1 \sqrt{  \bar X^{(x)}_t }\,dW_t
 \end{equation}
with initial condition $\bar X^{(x)}_0=x$.
Let us compare this process with the process $X$ (which satisfies equation  \eqref{eq:dX} with $\alpha_1 = \frac12$ and with initial condition $X_0 = x$). Note that the equations \eqref{eq:dX} and \eqref{eq:bessel} have the same diffusion coefficients and their drift coefficients for $y\in[0,\epsilon]$ can be compared as follows:
\[
a_1 y - b_1 x \le a_1 y \le a_1 \epsilon,
\quad x\geq 0,\; y\in[0,\epsilon].
\]
Hence, it follows from comparison principle that
\[
\forall x\geq 0,\; y\in[0,\tfrac{\epsilon}{2}]   \quad \prob_{x,y}\left(\forall t\in[0,\tau]\colon X_t\leq  \bar X^{(x)}_t \right) = 1.
    \]
     where
$\tau \coloneqq \inf\{t\geq 0: | Y_t-y|\geq  \frac{\epsilon}{2}\}$.

It is well known \cite[Ch.\ XI, \S1]{RevuzYor} that if $\epsilon$ is small enough, then the quadratic Bessel process $\bar X^{(x)}$ hits 0 with probability 1. Moreover, 
$\sigma^{(x)}_0\overset{\prob}{\to}0$ as $x\to0+$, where $\sigma^{(x)}_0=\inf\{t\geq 0:  \bar X^{(x)}_t=0 \}$.  Hence, there exist $t_1 > 0$ and $\delta>0$ such that
\[
\sup_{x\in[0,\delta]}\prob\left(\sigma^{(x)}_0\geq t_1\right)\leq \frac{1}{4}.
\]
Therefore, for every $x\in[0,\delta]$, $y\in [0,\frac{\epsilon}{2}]$:
\[
\prob\left(\left\{\sup_{t\in[0,t_1]}Y_t\leq \epsilon\right\}\cap\left\{\exists t\in[0,t_1]\colon X_t=0\right\}\right)\geq \frac{1}{2}.
\]
It follows from Theorem \ref{thm:access} that $(X,Y)$ visits the set $(0, \delta) \times (0, \frac{\epsilon}{2})$ with probability 1 for any initial distribution. The rest of the proof is similar to the proof of Theorem   \ref{thm:access}.
\end{proof}
  
\section{Reflected CKLS model}\label{Reflected-CKLS}
Since we are interested in the number of zeros of $X$ that, in turn, depend on the values of $Y$, it is nice to fix the behavior of trajectories of $Y$ separating them from zero. 
So let us consider the process \cite{chang2015,shi2025} with one-sided reflection (sometimes, two-sided reflection is considered \cite{chang2015, shi2025}, but for us now just the situation with   zeros and how to avoid for $X$ to be near zero, is important, therefore, we separate $Y$ from some lower positive level  and do not introduce upper reflection).
So, for $\alpha_2 \in [\frac12,1)$ we consider the reflected process  $Y^{(m)}$ such that $Y^{(m)}_t\ge m$ and $Y^{(m)}$ satisfies the following stochastic differential equation
\begin{equation}\label{eq:Y-reflected}
dY^{(m)}_t = \left (a_2 - b_2 Y^{(m)}_t\right) dt + \sigma_2 \left(Y^{(m)}_t\right)^{\alpha_2} dB_t + dL_t^{(m)},
\end{equation}
where $0 < m < Y_0$,
$L^{(m)} = \{L^{(m)}_t, t \ge 0\}$ is adapted to the filtration generated by the Wiener process $B = \{B_t, t \ge 0\}$,
the trajectories of $L^{(m)}$ are a.s.\ continuous and nondecreasing, $L^{(m)}_0 = 0$, $L^{(m)}$ increases only on the set $A = \{t : Y^{(m)}_t = m\}$, so that
\[
\int_0^t \ind_A  (Y^{(m)}_s) dL^{(m)}_s = L^{(m)}_t, \quad t \ge 0.
\]
In this case $L^{(m)}$ is the smallest nondecreasing function for which the process \eqref{eq:Y-reflected} is above or equals $m$,
and, according to the well-known solution of the Skorokhod reflection problem (see, e.g., \cite{pilipenko2014}),
\[
L^{(m)}_t = \sup_{0 \le s \le t} \left(\left(m-Z^{(m)}_s\right) \vee 0\right),
\]
where $Z^{(m)}_t= Y_0+\int_0^t(a_2 - b_2 Y^{(m)}_s) ds + \sigma_2\int_0^t (Y^{(m)}_s)^{\alpha_2} dB_s$, and $\{Y^{(m)}_t, t\ge 0\}$ is the unique solution of the equation \eqref{eq:Y-reflected}.

The next result is contained in \cite{pilipenko2014}.
\begin{lemma}\label{moments}  For any $T>0$,     $m>0$, $p>0$ and $\alpha_2\in[1/2,1]$
$$\ex \sup_{0\le t\le T}(Y^{(m)}_t)^p< \infty.$$
    
\end{lemma}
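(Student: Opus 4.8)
The plan is to apply It\^o's formula to $(Y^{(m)}_t-m)^p$ — a function tailored so that the reflection term cancels outright — and then run a Gronwall argument after controlling the resulting stochastic integral with the Burkholder--Davis--Gundy (BDG) inequality. Since $p\mapsto\bigl(\ex\sup_{t\le T}(Y^{(m)}_t)^p\bigr)^{1/p}$ is nondecreasing by Lyapunov's inequality, it suffices to treat $p\ge2$; for $0<p<2$ the claim then follows from $\ex\sup_{t\le T}(Y^{(m)}_t)^p\le\bigl(\ex\sup_{t\le T}(Y^{(m)}_t)^2\bigr)^{p/2}$. So fix $p\ge2$ and set $f(y)=(y-m)^p$; then $f\in C^2$ on $[m,\infty)$ with $f'(m)=0$, and because $Y^{(m)}$ lives in $[m,\infty)$ we may extend $f$ to a $C^2$ function on $\real$ without changing anything. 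The reason for using $(y-m)^p$ rather than $y^p$ is precisely this: $L^{(m)}$ increases only on $\set{s:Y^{(m)}_s=m}$, where $f'(Y^{(m)}_s)=f'(m)=0$, so the local-time contribution $\int_0^t f'(Y^{(m)}_s)\,dL^{(m)}_s$ vanishes identically; with $y^p$ one would instead retain the term $pm^{p-1}L^{(m)}_t$ and would have to bound $\ex L^{(m)}_T$ on its own (doable via $L^{(m)}_t\le b_2\int_0^t Y^{(m)}_s\,ds+\sup_{s\le t}\abs{\sigma_2\int_0^s(Y^{(m)}_u)^{\alpha_2}\,dB_u}$, but clumsier).

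First I would localize. For $n>Y_0$ put $\tau_n=\inf\set{t\ge0:Y^{(m)}_t\ge n}$, so that on $[0,\tau_n]$ the process and all integrands are bounded and the stochastic integrals are true martingales. By It\^o's formula, using that $L^{(m)}$ is continuous and of bounded variation,
\begin{align*}
(Y^{(m)}_{t\wedge\tau_n}-m)^p &= (Y_0-m)^p + \int_0^{t\wedge\tau_n} p(Y^{(m)}_s-m)^{p-1}(a_2-b_2Y^{(m)}_s)\,ds \\
&\quad + \frac{p(p-1)\sigma_2^2}{2}\int_0^{t\wedge\tau_n}(Y^{(m)}_s-m)^{p-2}(Y^{(m)}_s)^{2\alpha_2}\,ds \\
&\quad + M_{t\wedge\tau_n} + p\int_0^{t\wedge\tau_n}(Y^{(m)}_s-m)^{p-1}\,dL^{(m)}_s,
\end{align*}
where $M_t=p\sigma_2\int_0^t(Y^{(m)}_s-m)^{p-1}(Y^{(m)}_s)^{\alpha_2}\,dB_s$. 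As noted, the last integral is zero (since $p-1\ge1$). Using $Y^{(m)}_s\ge m$, the elementary bound $(Y^{(m)}_s)^{2\alpha_2}\le C\bigl(1+(Y^{(m)}_s-m)^{2\alpha_2}\bigr)$, and the fact that the exponents $p-1,\;p-2,\;p-2+2\alpha_2$ all lie in $[0,p]$ (because $\alpha_2\in[\tfrac12,1]$ and $p\ge2$), the drift integrand and the second-order integrand are each bounded by $C\bigl(1+(Y^{(m)}_s-m)^p\bigr)$; the favourable term $-b_2Y^{(m)}_s(Y^{(m)}_s-m)^{p-1}\le0$ is simply discarded.

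Next I would bound the martingale. Here $\langle M\rangle_{t\wedge\tau_n}=p^2\sigma_2^2\int_0^{t\wedge\tau_n}(Y^{(m)}_s-m)^{2p-2}(Y^{(m)}_s)^{2\alpha_2}\,ds$, whose integrand is at most $C(Y^{(m)}_s-m)^p\bigl(1+(Y^{(m)}_s-m)^p\bigr)$, since $(Y^{(m)}_s-m)^{\beta}\le(Y^{(m)}_s-m)^p\bigl(1+(Y^{(m)}_s-m)^p\bigr)$ for every $\beta\in[p,2p]$, and both $2p-2$ and $2p-2+2\alpha_2$ lie in $[p,2p]$ when $p\ge2$. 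Writing $\psi_n(t)=\ex\sup_{u\le t}(Y^{(m)}_{u\wedge\tau_n}-m)^p$, BDG followed by Young's inequality $ab\le\tfrac\varepsilon2a^2+\tfrac1{2\varepsilon}b^2$ applied with $a^2=1+\sup_{u\le t}(Y^{(m)}_{u\wedge\tau_n}-m)^p$ and $b^2=\int_0^{t\wedge\tau_n}(Y^{(m)}_s-m)^p\,ds$ gives, for $\varepsilon$ small enough,
\[
\ex\sup_{r\le t}\abs{M_{r\wedge\tau_n}}\le C\,\ex\bigl(\langle M\rangle_{t\wedge\tau_n}\bigr)^{1/2}\le\tfrac14\bigl(1+\psi_n(t)\bigr)+C_\varepsilon\int_0^t\psi_n(s)\,ds .
\]
Taking $\ex\sup_{r\le t}$ in the It\^o identity and collecting the estimates yields $\psi_n(t)\le C(T)+C\int_0^t\psi_n(s)\,ds$ on $[0,T]$, with constants independent of $n$; since $\psi_n(t)\le(n-m)^p<\infty$, Gronwall's lemma gives $\sup_n\sup_{t\le T}\psi_n(t)\le C(T)$. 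Consequently $\prob(\tau_n\le T)\le\psi_n(T)/(n-m)^p\to0$, so $\tau_n\uparrow\infty$ a.s., and monotone convergence yields $\ex\sup_{t\le T}(Y^{(m)}_t-m)^p=\lim_n\psi_n(T)\le C(T)$. Finally $\ex\sup_{t\le T}(Y^{(m)}_t)^p\le2^{p-1}\bigl(\ex\sup_{t\le T}(Y^{(m)}_t-m)^p+m^p\bigr)<\infty$, which together with the reduction to $p\ge2$ proves the lemma.

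I expect the one genuinely delicate step to be closing the Gronwall loop in the presence of the superlinear growth of $\langle M\rangle$: because the diffusion power $2\alpha_2$ pushes the quadratic variation up to order $2p$, it cannot be bounded directly by $\int_0^t\psi_n$, and it is the combination of the inequality $(y-m)^\beta\le(y-m)^p\bigl(1+(y-m)^p\bigr)$ for $\beta\le2p$ with Young's inequality that lets the running supremum be absorbed back into the left-hand side. Everything else — the It\^o expansion, the cancellation of the local-time term, and the passage to the limit — is routine once the function $f(y)=(y-m)^p$ has been chosen.
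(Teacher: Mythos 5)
Your argument is correct. Note, however, that the paper does not prove Lemma \ref{moments} at all: it simply states that the result is contained in \cite{pilipenko2014}, where moment bounds for reflected SDEs are usually obtained through the explicit solution of the Skorokhod problem, i.e.\ from $L^{(m)}_t=\sup_{s\le t}\bigl((m-Z^{(m)}_s)\vee 0\bigr)$ one gets $\sup_{s\le t}Y^{(m)}_s\le C\bigl(m+\sup_{s\le t}\abs{Z^{(m)}_s}\bigr)$, which reduces the claim to moment estimates for the unreflected semimartingale part $Z^{(m)}$, handled by the linear growth of the drift, the sublinear growth of the diffusion coefficient, BDG and Gronwall. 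Your route is a genuine, self-contained alternative: instead of controlling the reflection term through the explicit Skorokhod map, you annihilate it locally by working with $f(y)=(y-m)^p$, whose derivative vanishes exactly where $L^{(m)}$ increases, and then close the estimate by the standard localization--BDG--Young--Gronwall loop; the reduction to $p\ge2$ via Lyapunov's inequality, the absorption of the running supremum coming from the quadratic variation of order up to $2p$, and the passage to the limit via $\prob(\tau_n\le T)\le\psi_n(T)/(n-m)^p$ and monotone convergence are all sound. The Skorokhod-map argument of the reference is shorter and gives bounds directly for $\sup_{s\le t}Y^{(m)}_s$ in terms of the driving semimartingale, while yours avoids any structural input about the reflection beyond the support property of $L^{(m)}$ and would extend verbatim to more general boundary behaviour where an explicit formula for $L^{(m)}$ is unavailable.
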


\begin{lemma}\label{lemmeanrev}
The process $Y^{(m)}$ is mean-reverting with the same constant as $Y^{(0)}$; namely,
\[
\lim_{t\to\infty}\ex Y^{(m)}_t = \frac{a_2}{b_2}.
\]
\end{lemma}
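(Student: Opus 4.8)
The plan is to take expectations in the Skorokhod equation \eqref{eq:Y-reflected} and show that the contribution of the local-time term $L^{(m)}$ vanishes in the Cesàro/asymptotic sense, so that $\ex Y^{(m)}_t$ satisfies, up to an asymptotically negligible perturbation, the same linear ODE as in the non-reflected case. First I would note that by Lemma~\ref{moments} the stochastic integral $\sigma_2\int_0^t (Y^{(m)}_s)^{\alpha_2}\,dB_s$ is a true martingale (its quadratic variation has finite expectation since $2\alpha_2\le 2$ and $\sup_{[0,T]}(Y^{(m)}_s)^p$ is integrable for all $p$), hence has zero expectation. Therefore
\[
\ex Y^{(m)}_t = Y_0 + \int_0^t\left(a_2 - b_2\,\ex Y^{(m)}_s\right)ds + \ex L^{(m)}_t .
\]
Writing $g(t)\coloneqq \ex Y^{(m)}_t$ and $\ell(t)\coloneqq \ex L^{(m)}_t$ (a nondecreasing function, finite by Lemma~\ref{moments} applied to the explicit supremum formula for $L^{(m)}$), this is the integral form of $g'(t) = a_2 - b_2 g(t) + \ell'(t)$, whose solution is
\[
g(t) = \frac{a_2}{b_2} + \left(Y_0 - \frac{a_2}{b_2}\right)e^{-b_2 t} + \int_0^t e^{-b_2(t-s)}\,d\ell(s).
\]
So the whole statement reduces to showing $\int_0^t e^{-b_2(t-s)}\,d\ell(s)\to 0$ as $t\to\infty$.

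The key step — and the main obstacle — is controlling the growth of $\ell(t)=\ex L^{(m)}_t$. The natural claim is that $L^{(m)}_t$ grows at most linearly, i.e.\ $\ell(t)\le C(1+t)$ for some constant $C$; given that, a standard splitting of the convolution integral at $t/2$ (the part over $[0,t/2]$ is bounded by $e^{-b_2 t/2}\ell(t/2)\to 0$, and the part over $[t/2,t]$ is bounded by $b_2^{-1}\sup_{s\ge t/2}\ell'$... which is not quite available pointwise) finishes the argument. A cleaner route: integrate by parts to get $\int_0^t e^{-b_2(t-s)}\,d\ell(s) = \ell(t) - b_2\int_0^t e^{-b_2(t-s)}\ell(s)\,ds$, and if $\ell(t)/t$ is bounded then one shows directly that $\ell(t) - b_2\int_0^t e^{-b_2(t-s)}\ell(s)\,ds$ stays bounded; to get the limit zero one needs the sharper fact that $\ell(t)-\ell(t-h)$ is uniformly small for large $t$, which follows if $\ell$ has sublinear growth. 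The honest way to obtain the needed estimate is a Lyapunov/drift argument: the Skorokhod map pushes up only when $Y^{(m)}$ is at level $m$, where the drift $a_2-b_2 m$ of the unreflected dynamics is strictly positive (recall $m<Y_0$ and, more to the point, we may assume $m<a_2/b_2$, which is the regime of interest), so the expected cumulative local time cannot outpace a linear rate; quantitatively, applying $\mathcal A$ to $V(y)=y$ away from the barrier and to the reflected generator at the barrier, one derives $\ex Y^{(m)}_t + (b_2 m - a_2)^{+}$-type bounds forcing $\ell(t)\le C(1+t)$, and in fact $\ell(t)/t\to$ a finite limit $\lambda_m\ge 0$ with $\lambda_m\to 0$ as... no: $\lambda_m$ need not be zero, but what we need is only $\int_0^t e^{-b_2(t-s)}d\ell(s)\to b_2^{-1}\lambda_m$ — wait, that would spoil the claim unless $\lambda_m=0$.

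Let me therefore correct the plan: the right assertion is $\lambda_m=0$, i.e.\ $L^{(m)}_t=o(t)$ in expectation, which is exactly what makes the mean-reversion constant unchanged. This is plausible because the reflected CKLS process is itself ergodic with a stationary law supported on $[m,\infty)$ whose mean we may call $\mu_m$; then $\ex Y^{(m)}_t\to\mu_m$, and feeding this back into the balance relation $b_2\,\ex Y^{(m)}_t = a_2 + \ell'(t) + o(1)$ forces $\ell'(t)\to b_2\mu_m - a_2$. For the claim $\mu_m=a_2/b_2$ I would argue that the Skorokhod correction only acts on a time-set of zero Lebesgue-density in the stationary regime when $m$ is below the natural lower reach of the process — but CKLS with $\alpha_2<1$ does hit every level, so this is false in general and $\mu_m>a_2/b_2$ strictly. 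Hence the statement as written must rely on $m$ being a \emph{reflecting} boundary that the process only touches instantaneously with the pushing being, in the long run, negligible relative to $t$; concretely, one uses the stationary density $p^{(m)}_\infty$ of the reflected diffusion (obtained by the scale/speed-measure construction on $[m,\infty)$ with a reflecting left endpoint) and computes $\int_m^\infty(a_2-b_2 y)p^{(m)}_\infty(y)\,dy$, which by the stationarity identity equals $-\lim \ell'(t)=-\lambda_m$, so $\mu_m = (a_2+\lambda_m)/b_2$. The main obstacle is thus pinning down $\lambda_m$: I expect the intended argument is that, by Lemma~\ref{moments} and a uniform-integrability/ergodic-averaging argument, $\frac1t\ex L^{(m)}_t\to 0$ because the reflected process spends asymptotically vanishing proportion of "push" — and I would carry this out by comparing $Y^{(m)}$ from above with the unreflected $Y$ started at the same point (so $Y^{(m)}_t\ge Y_t$ by the comparison/minimality of the Skorokhod solution), noting $\ex Y_t\to a_2/b_2$, and from below by $m$, then using that $Y^{(m)}-Y = $ (a process whose drift is $-b_2(Y^{(m)}-Y)$ plus $dL^{(m)}$) to get $\ex(Y^{(m)}_t-Y_t) = \int_0^t e^{-b_2(t-s)}d\ell(s)$ exactly, and independently bounding $\ex(Y^{(m)}_t - Y_t)\to 0$ via the fact that $Y_t$ itself visits $[0,m)$ infinitely often (Lemma~\ref{lem-2-3}), on which events the Skorokhod push is inactive and the two processes re-couple in $L^1$. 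That re-coupling estimate is the technical heart; once it gives $\ex(Y^{(m)}_t-Y_t)\to 0$, the conclusion $\ex Y^{(m)}_t\to a_2/b_2$ is immediate from \eqref{eq:EY-a}.
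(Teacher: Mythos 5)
Your first half coincides with the paper's own proof: taking expectations in \eqref{eq:Y-reflected} (the stochastic integral is a true martingale thanks to Lemma~\ref{moments}) and solving the resulting linear equation gives
\[
\ex Y^{(m)}_t=\frac{a_2}{b_2}+\Bigl(Y_0-\frac{a_2}{b_2}\Bigr)e^{-b_2t}+\int_0^t e^{-b_2(t-s)}\,d\ell(s),\qquad \ell(t)\coloneqq\ex L^{(m)}_t,
\]
which is exactly the paper's expression after its integration by parts (there written as $\ell_t-b_2\int_0^te^{b_2(s-t)}\ell_s\,ds$ plus the same exponential terms). Both you and the paper thus reduce the lemma to showing that the convolution term vanishes as $t\to\infty$. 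At that point the paper argues in one line that $\ell_t$, being nondecreasing, tends to a limit $\ell_\infty$ and that $b_2\int_0^te^{b_2(s-t)}\ell_s\,ds$ tends to the same limit, so the two contributions cancel; this cancellation presumes $\ell_\infty<\infty$. You, by contrast, correctly identify that the entire content of the lemma is the negligibility of $\ell$, i.e.\ $\lambda_m\coloneqq\lim_t\ell(t)/t=0$ together with vanishing of the exponential average of $d\ell$ --- but you never prove it.

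That is the genuine gap, and the route you finally propose to close it is a step that would fail. You suggest comparing with the unreflected solution $Y$ driven by the same $B$ and started at $Y_0$, so that $Y^{(m)}_t\ge Y_t$ and $\ex\bigl(Y^{(m)}_t-Y_t\bigr)=\int_0^te^{-b_2(t-s)}\,d\ell(s)$, and then establishing an $L^1$ ``re-coupling'' $\ex\bigl(Y^{(m)}_t-Y_t\bigr)\to0$. Your own intermediate computation shows this cannot hold: the reflected diffusion on $[m,\infty)$ is positive recurrent with stationary density proportional to the unrestricted stationary density truncated to $[m,\infty)$, whose mean $\mu_m$ is strictly larger than $a_2/b_2$, and the stationarity identity you wrote yields $\ell(t)/t\to\lambda_m=b_2\mu_m-a_2>0$; since $\ex Y_t\to a_2/b_2$ by \eqref{eq:EY-a}, the difference $\ex\bigl(Y^{(m)}_t-Y_t\bigr)$ stays bounded away from zero rather than re-coupling --- each excursion of $Y$ below $m$ produces an upward push of $Y^{(m)}$ that is only exponentially damped, never removed. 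A cruder sanity check points the same way: the lemma allows any $m<Y_0$, and if $m\ge a_2/b_2$ then trivially $\ex Y^{(m)}_t\ge m$, so the limit cannot be $a_2/b_2$ from below. Your proposal is therefore internally inconsistent: the sound part of your discussion (the truncated-stationary-measure argument) shows that the ``technical heart'' you defer is not merely missing but unavailable, so the proof is not completed. In short, you located precisely the delicate point that the paper's one-line cancellation passes over (finiteness of $\ell_\infty$, equivalently $\lambda_m=0$), but you did not --- and, by your own analysis, could not --- supply the estimate your plan requires.
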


\begin{proof}
Equation \eqref{eq:Y-reflected} can be rewritten as
\begin{align*}
Y^{(m)}_t - L^{(m)}_t &= Y_0 + a_2 t
- b_2 \int_0^t \left(Y^{(m)}_s - L^{(m)}_s\right) ds
\\*
&\quad+ \sigma_2 \int_0^t \left(Y^{(m)}_s\right)^{\alpha_2} dB_s
- b_2 \int_0^t L^{(m)}_s\, ds,
\quad t \ge 0.
\end{align*}
If we denote
$Z_t = Y^{(m)}_t - L^{(m)}_t$
and note that
$\ex \int_0^t (Y^{(m)}_s)^{\alpha_2} dB_s = 0$,
we get that
\[
\ex Z_t = Y_0 + a_2 t - b_2 \int_0^t \ex Z_s \,ds
- b_2 \int_0^t \ex L^{(m)}_s\, ds.
\]
Denote, for simplicity, 
$\ex Z_t = z_t$, $\ex L^{(m)}_t = \ell_t$.
Then we get an ODE
\[
z_t + b_2 \int_0^t z_s\,ds = Y_0 + a_2 t - b_2 \int_0^t \ell_t\, ds,
\]
or
\begin{equation}\label{eq:ode}
u_t' + b_2 u_t = Y_0 + a_2 t - b_2 \int_0^t \ell_s\,ds,
\end{equation}
where
$u_t = \int_0^t z_s\, ds$.
The solution of \eqref{eq:ode} has the form
\[
u_t = a_2 \int_0^t s e^{b_2(s-t)}\,ds - b_2 \int_0^t e^{b_2(s-t)} \int_0^s \ell_u\,du\,ds + \frac{Y_0}{b_2} \left(1 - e^{-b_2 t}\right),
\]
whence
\[
z_t = \frac{a_2}{b_2} \left(1 - e^{-b_2 t}\right)
- b_2 \int_0^t e^{b_2(s-t)} \ell_s \,ds + Y_0 e^{-b_2 t},
\]
or
\[
\ex Y^{(m)}_t = \ell_t + \frac{a_2}{b_2} \left(1 - e^{-b_2 t}\right)
- b_2 \int_0^t e^{b_2(s-t)} \ell_s \,ds + Y_0 e^{-b_2 t}.
\]
Let $t \to +\infty$.
Then $\ell_t$, being a nondecreasing function, tends to some limit $\ell_\infty$, and
\[
\lim_{t\to\infty} b_2 \int_0^t e^{b_2(s-t)} \ell_s \,ds
= \lim_{t\to\infty} \frac{b_2 e^{b_2 t} \ell_t}{b_2 e^{b_2t}} = \ell_\infty,
\]
and the proof follows.
\end{proof}
\begin{remark} Let us briefly discuss the level $m>0$ of reflection for the process  $Y$. In some sense,  any level is appropriate, because substituting $Y^{(m)}$ instead of $Y$ into $X$, according to comparison Theorem  \ref{comp}, and standard properties of $X$ with nonrandom drift, we get that in the worst case $\alpha_1=1/2$ $X$ can have only reflecting zeros from which it immediately attains strictly positive values. However, it still can spend some time under any positive level. And only if we choose $m\ge \frac12\sigma_1^2,$ then $X$ is strictly positive, again, due to comparison theorem. 
\end{remark}

\appendix

\section{External process in generalized form: existence, uniqueness and comparison theorem}

In this appendix, we consider the stochastic differential equation \eqref{eq:dX} for the \emph{external} process corresponding to a broader class of \emph{internal} processes, which are not necessarily solutions to \eqref{eq:dY}.
In subsection \ref{app:ex-un}, we prove the existence and uniqueness of a solution to this equation. In subsection \ref{comp}, we establish a version of the comparison theorem applicable to such equations.

\subsection{Existence and uniqueness}\label{app:ex-un}
\begin{theorem}\label{th:ex-un}
Consider equation of the form
\begin{equation}\label{eq_every}
X_t = X_0 + \int_0^t \left(U_s - b X_s\right) ds
+ \sigma  \int_0^t X_s^{\alpha}\, dW_s,
\end{equation}
where $X_0,   b, \sigma > 0$, $t \ge 0$,
and $U = \set{U_t, t \ge 0}$ is a continuous non-negative  process adapted to the filtration $\mathbb{F}.$
\begin{itemize}
\item [$(i)$]
Let $\alpha = 1$. Then the linear equation
\begin{equation}\label{eq_forU}
 X_t = X_0 + \int_0^t \left(U_s - b  X_s\right) ds
+ \sigma  \int_0^t X_s\, dW_s   
\end{equation}
has a unique strong solution of the form
\begin{equation}\label{linearrepr}
    X_t = \exp\set{-b  t - \frac{\sigma^2}{2} t + \sigma  W_t}
\left(X_0 + \int_0^t U_s \exp\set{b  s + \frac{\sigma^2}{2} s - \sigma  W_s} ds\right).
\end{equation}
This solution is strictly positive.
\item [$(ii)$]
Let $\alpha \in (\frac12, 1)$, the process $U$ satisfies the condition of item $(i)$, and for any $T>0$ $$\sup_{0\le t\le T}\ex U_t^2<\infty.$$ Then the   equation \eqref{eq_forU} has unique strong solution such that  for any $T>0$ \begin{equation}\label{sup_eq}
 \ex \sup_{0\le t\le T} X_t^2<\infty.   
\end{equation} 
\end{itemize} 
\end{theorem}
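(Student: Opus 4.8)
The plan is to dispatch part $(i)$ by an explicit verification and to prove part $(ii)$ --- the substantial claim --- by combining Yamada-type pathwise uniqueness with a monotone approximation of the drift term $U$ and a priori moment estimates. For $(i)$: since $\alpha=1$ the equation \eqref{eq_forU} is linear, so letting $\Phi_t=\exp\set{-bt-\tfrac{\sigma^2}{2}t+\sigma W_t}$ (which solves $d\Phi_t=-b\Phi_t\,dt+\sigma\Phi_t\,dW_t$, $\Phi_0=1$), It\^o's product rule shows that $X_t=\Phi_t\bigl(X_0+\int_0^t U_s\Phi_s^{-1}\,ds\bigr)$, i.e.\ \eqref{linearrepr}, solves \eqref{eq_forU}; strict positivity is immediate since $\Phi_t>0$ and the bracket exceeds $X_0>0$. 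Uniqueness holds because the difference of two solutions satisfies a homogeneous linear SDE, whence Gr\"onwall's inequality applied to the second moment of the difference stopped at exit times from $[-n,n]$ forces it to vanish.

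For $(ii)$, I would first establish pathwise uniqueness: the drift $u-bx$ is Lipschitz in $x$ uniformly in $(u,\omega)$ and the diffusion coefficient $\sigma x^{\alpha}$ is H\"older continuous of order $\alpha>\tfrac12$ on $[0,\infty)$, so for two solutions driven by the same $W$ with the same $U$ the process $U$ cancels upon subtraction and the classical Yamada--Watanabe argument (cf.\ \cite[Thm.~3.2, p.~182]{ikeda-watanabe}) applies verbatim. For existence I would truncate: set $U^{N}_t=U_t\wedge N$. For fixed $N$ the drift is bounded (hence of linear growth), so the truncated equation (with diffusion coefficient $\sigma(x^{+})^{\alpha}$) admits a weak solution by Skorokhod's existence theorem \cite[p.~59]{skor61}, and together with pathwise uniqueness this yields a strong solution $X^{N}$. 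Comparing $X^{N}$ with the identically zero solution of $d\widehat X_t=-b\widehat X_t\,dt+\sigma|\widehat X_t|^{\alpha}\,dW_t$, $\widehat X_0=0$, via the comparison theorem of the Appendix (Theorem~\ref{th:comparison}), gives $X^{N}\ge 0$, so $X^{N}$ in fact solves \eqref{eq_forU}; and since $U^{N}\le U^{N+1}$, the same comparison theorem gives $X^{N}_t\le X^{N+1}_t$ a.s.

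The heart of the argument is a uniform-in-$N$ moment bound. Fix $T>0$ and, for fixed $N$, set $\tau_n=\inf\set{t:X^{N}_t\ge n}$. Applying It\^o's formula to $(X^{N}_{t\wedge\tau_n})^{2}$, taking expectations (the martingale term vanishes), using $2X^{N}_sU^{N}_s\le (X^{N}_s)^{2}+U_s^{2}$ and $(X^{N}_s)^{2\alpha}\le 1+(X^{N}_s)^{2}$ (since $2\alpha<2$), discarding the nonpositive term $-2b(X^{N}_s)^{2}$, and invoking Gr\"onwall's inequality gives $\ex(X^{N}_{t\wedge\tau_n})^{2}\le C_1(T)$ with $C_1(T)$ independent of $N$ and $n$; this already forces $\tau_n\to\infty$ a.s., and Fatou's lemma yields $\sup_{t\le T}\ex(X^{N}_t)^{2}\le C_1(T)$. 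For the supremum I would apply the Burkholder--Davis--Gundy inequality to the martingale part of $(X^{N}_{t\wedge\tau_n})^{2}$, bound $(X^{N}_s)^{2+2\alpha}\le\bigl(\sup_{u\le t\wedge\tau_n}X^{N}_u\bigr)^{2\alpha}(X^{N}_s)^{2}$ under the integral, and use Cauchy--Schwarz together with Jensen's inequality (concavity of $z\mapsto z^{\alpha}$, $\alpha<1$), arriving at an inequality of the form $f_N\le C_2(T)+C_3(T)f_N^{\beta}$, where $f_N=\ex\sup_{s\le t\wedge\tau_n}(X^{N}_s)^{2}$, $\beta\in(0,1)$ (e.g.\ $\beta=\alpha/2$), and $C_2,C_3$ are independent of $N$ and $n$. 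Since $f_N\le n^{2}<\infty$ and $z\mapsto C_2+C_3z^{\beta}-z$ is positive at $0$ and tends to $-\infty$, this pins $f_N$ below the unique positive root of $z=C_2+C_3z^{\beta}$, a bound depending only on $C_2,C_3,\beta$; letting $n\to\infty$ gives $\ex\sup_{t\le T}(X^{N}_t)^{2}\le C_4(T)$ uniformly in $N$.

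Finally I would pass to the limit. With $X_t=\lim_N X^{N}_t$ (an increasing limit), monotone convergence gives \eqref{sup_eq}, i.e.\ $\ex\sup_{t\le T}X_t^{2}\le C_4(T)<\infty$, so $X$ is a.s.\ finite. In the integral equation for $X^{N}$ one has $\int_0^t(U^{N}_s-bX^{N}_s)\,ds\to\int_0^t(U_s-bX_s)\,ds$ a.s.\ by monotone and dominated convergence (using $\int_0^T\ex U_s\,ds<\infty$), and $\ex\int_0^T\bigl((X^{N}_s)^{\alpha}-X_s^{\alpha}\bigr)^{2}\,ds\to0$ by dominated convergence with integrable majorant $1+X_s^{2}$, so the stochastic integrals converge in $L^2$; hence $X$ solves \eqref{eq_forU}, and by pathwise uniqueness it is the genuine continuous strong solution. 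I expect \textbf{the main obstacle} to be the self-improving supremum estimate: one must run the Burkholder--Davis--Gundy and Gr\"onwall steps under a localizing sequence so that $f_N$ is a priori finite, and then exploit $\beta<1$ to close the nonlinear bound $f_N\le C_2+C_3f_N^{\beta}$ with constants uniform in $N$ and in the localization. A secondary technical point is the weak-existence step for the truncated equations and the verification that the monotone limit is a true solution rather than merely a supersolution.
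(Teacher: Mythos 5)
Your proposal is sound, and for part $(i)$ it coincides with the paper's (unwritten) direct verification. For part $(ii)$ your uniqueness step is exactly the paper's: $U$ cancels in the difference of two solutions and the Yamada--Watanabe argument of \cite[Thm.~3.2, p.~182]{ikeda-watanabe} applies. Where you genuinely diverge is the existence/moment part. The paper does not truncate $U$: it modifies Skorokhod's existence proof for the equation with the given $U$, using $\sup_{t\le T}\ex U_t^2<\infty$ to produce a solution together with $\sup_{t\le T}\ex X_t^2<\infty$, and then obtains \eqref{sup_eq} in two lines from the elementary bound $\sup_{t\le T}X_t^2\le 3X_0^2+3\bigl(\int_0^T U_s\,ds\bigr)^2+3\sigma^2\sup_{t\le T}\bigl(\int_0^t X_s^{\alpha}\,dW_s\bigr)^2$ together with the Burkholder/Doob inequality and $x^{2\alpha}\le 1+x^2$; no self-improving inequality is needed because the pointwise second moments are already known to be finite. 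Your scheme (truncate $U$, solve for each $N$, order the $X^N$ by comparison, prove a uniform bound via the nonlinear inequality $f_N\le C_2+C_3f_N^{\beta}$ under localization, then pass to the monotone limit and identify the limit through convergence of the integrals) is a legitimate alternative; it is more explicit about where localization enters and how the solution is constructed, at the cost of extra machinery.

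One step is weaker than you present it: truncating $U$ at level $N$ does not make the cited existence theorem of \cite[p.~59]{skor61} directly applicable, because the obstacle is not the size of $U$ (the drift already has at most linear growth in $x$) but the fact that $U^N$ is an exogenous adapted process, so the coefficients are random and the equation is not of the Markovian form $a(t,x)\,dt+\sigma(t,x)\,dW_t$ covered by that theorem. You therefore still need the same ``modification of the proof'' the paper alludes to (e.g.\ an Euler/Picard scheme with $U$ frozen along a partition, or an existence result for SDEs with random coefficients), and the passage ``weak existence plus pathwise uniqueness implies strong existence'' must be formulated for the equation with the pair $(U,W)$ as input, not for a Markovian SDE. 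A second, minor point: both of your appeals to Theorem~\ref{th:comparison} (against $\widehat X\equiv 0$ and between $X^{N}$ and $X^{N+1}$) violate its stated hypothesis $U^1_0=U^2_0$; that hypothesis appears inessential (the paper itself ignores it in the proof of Theorem~\ref{theorsyst}), but as written your application is not literal and deserves a remark.
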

\begin{proof}
Item $(i)$ can be proved by direct calculations. In item $(ii)$, taking into account the existence of the second moment of $U$, the proof of   uniqueness is the same as in the standard Yamada--Watanabe theorem, see e.g., \cite[Theorem 3.2, page 182]{ikeda-watanabe}. As for existence, one can easily modify the proof of the existence theorem given in \cite[p.\ 59]{skor61}, again, taking into account the finiteness of $\sup_{0\le t\le T} \ex U_t^2$ for any $T>0$ and show that for the solution $X$ it holds that   $\sup_{0\le t\le T}\ex  X_t^2<\infty$.  Concerning \eqref{sup_eq}, note that for any $T > 0$,
\[
\sup_{0 \le t \le T} X_t^2 \le 3 X_0^2 + 3 \int_0^T U_s^2 \,ds
+ 3 \sup_{0 \le t \le T} \left(\int_0^t X_s^{\alpha_2} \,dW_s\right)^2.
\]
Then we can apply the Burkholder--Gundy inequality for the square-integrable martingale\linebreak $\int_0^t X_s^{\alpha_2} \,dW_s$ and get that
\begin{align*}
\ex \sup_{0 \le t \le T} X_t^2 &\le 3 X_0^2 + 3 \int_0^T \ex U_s^2 \,ds
+ 12 \int_0^T \ex X_s^{2\alpha_2} \,ds
\\
 &\le 3 X_0^2 + 3 \int_0^T \ex U_s^2 \,ds + 12 T
+ 12 \int_0^T \ex X_s^2 \,ds,
\end{align*}
and the proof immediately follows.
\end{proof}

\subsection{Comparison theorem}\label{comp}
Now let us establish the comparison theorem for the external processes.  This comparison result is absolutely obvious in the linear case when $\alpha_1=1,$ because it follows from representation \eqref{linearrepr} that if $U^i=\{U^i_t, t\ge 0\}$, $i=1,2$ are two continuous processes, adapted to the filtration $\mathbb{F}$, and with probability 1 it holds that $U^1_t\ge U_t^2$, $t\ge 0$, then respective solutions are in an analogous inequality. Therefore we now consider the equation \eqref{eq_every} in the case $\alpha_1\in[1/2,1)$.
\begin{theorem}\label{th:comparison}
Let $\set{U^i_t, t \ge 0}$, $i = 1, 2$, be  two continuous, non-negative processes satisfying the following conditions:
\begin{enumerate}[(i)]
\item for any $T > 0$ $\sup_{0 \le t \le T} \ex U^i_t < \infty$;
\item $U^1_0 = U^2_0$;
\item $U^1_t \ge U^2_t$, $t \ge 0$, with probability 1.
\end{enumerate}
Then the solutions of the equations
\[
X^i_t = X_0 + \int_0^t \left(U^i_s - b X^i_s\right) ds
+ \sigma  \int_0^t \left(X^i_s\right)^{\alpha}\, dW_s,
\]
satisfy the relation
\[
X^1_t \ge X^2_t, \; t \ge 0, \text{ with probability } 1.
\]
\end{theorem}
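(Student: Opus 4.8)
The plan is to prove the comparison theorem for the external processes $X^1, X^2$ by the standard Yamada-type argument, adapted to handle the random (but ordered) driving terms $U^1 \ge U^2$. The key point making this work is that the two equations share the same diffusion coefficient $\sigma x^\alpha$ with $\alpha \in [1/2,1)$, and the same Brownian motion $W$, so the classical local-time estimate of Yamada and Watanabe applies to the difference $X^1 - X^2$, while the drift ordering $U^1 \ge U^2$ pushes $X^1$ above $X^2$. The hypotheses (i)--(iii) are exactly what is needed: (i) guarantees the solutions exist and have enough integrability (via Theorem~\ref{th:ex-un}) so that all the expectations below are finite, (ii) gives the equality of starting points, and (iii) gives the pointwise drift comparison.

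First I would recall the Yamada--Watanabe device: since $\frac12 \le \alpha < 1$, the Hölder exponent of $x \mapsto \sigma x^\alpha$ is at least $\tfrac12$, so one can choose a decreasing sequence $a_n \downarrow 0$ with $\int_{a_n}^{a_{n-1}} u^{-2\alpha}\,du = n$ (using $2\alpha \le 2$, and in the borderline case $\alpha=\tfrac12$ the usual $\int u^{-1}\,du$ construction), and continuous functions $\psi_n$ supported in $(a_n, a_{n-1})$ with $0 \le \psi_n(u) \le \frac{2}{n u^{2\alpha}}$ and $\int_0^u \!\int_0^v \psi_n(w)\,dw\,dv \to u^+$ as $n\to\infty$, uniformly on compacts. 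Let $\phi_n$ denote this double primitive, so $\phi_n \in C^2$, $\phi_n(x) \to x^+$, $0 \le \phi_n' \le 1$, $\phi_n'(x) = 0$ for $x \le 0$, and $\phi_n''(x) = \psi_n(x)$.

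The main computation is to apply Itô's formula to $\phi_n(X^1_t - X^2_t)$. Writing $\Delta_t = X^1_t - X^2_t$ (note $\Delta_0 = 0$ by (ii)), and $\sigma(x) = \sigma x^\alpha$ extended by $0$ for $x<0$, one gets
\[
\phi_n(\Delta_t) = \int_0^t \phi_n'(\Delta_s)\bigl[(U^1_s - U^2_s) - b\Delta_s\bigr]\,ds
+ \frac12 \int_0^t \phi_n''(\Delta_s)\bigl(\sigma(X^1_s) - \sigma(X^2_s)\bigr)^2 ds
+ M_t,
\]
where $M_t = \int_0^t \phi_n'(\Delta_s)(\sigma(X^1_s)-\sigma(X^2_s))\,dW_s$ is a martingale (using the integrability from \eqref{sup_eq}). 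Taking expectations kills $M_t$. On the first term, the contribution of $U^1_s - U^2_s$ is $\le 0$ because $\phi_n' \ge 0$ and $U^1_s \ge U^2_s$ by (iii), so it can be dropped; the term $-b\phi_n'(\Delta_s)\Delta_s$ is bounded above by $b\,\Delta_s^- \le b\,|\Delta_s|$ but more precisely, using $\phi_n'(x)x \ge 0$ for $x\ge 0$ and $\phi_n'=0$ for $x<0$, one sees $-\phi_n'(\Delta_s)\Delta_s \le 0$ as well, so actually the whole drift contributes nonpositively (this is the clean feature of the one-sided comparison — I should double-check, but the sign works in our favour). The second term is the Yamada estimate: by the $\alpha$-Hölder bound $(\sigma(x)-\sigma(y))^2 \le \sigma^2|x-y|^{2\alpha}$ for $x,y \ge 0$, and the support/size bounds on $\psi_n$,
\[
\ex\,\frac12 \int_0^t \phi_n''(\Delta_s)\bigl(\sigma(X^1_s)-\sigma(X^2_s)\bigr)^2 ds
\le \frac{\sigma^2}{2}\,\ex \int_0^t \frac{2}{n\,\Delta_s^{2\alpha}}\cdot \Delta_s^{2\alpha}\,ds
\le \frac{\sigma^2 t}{n} \to 0.
\]
Hence $\ex\,\phi_n(\Delta_t) \le \sigma^2 t/n \to 0$, and letting $n \to \infty$ with $\phi_n(\Delta_t) \uparrow \Delta_t^+$ (monotone/dominated convergence, legitimate since $\Delta_t^+ \le X^1_t + X^2_t \in L^1$ by \eqref{sup_eq}) gives $\ex\,\Delta_t^+ = 0$, i.e. $\Delta_t \le 0$ a.s.\ for each fixed $t$; by path-continuity this upgrades to $\prob(\forall t\ge 0: X^1_t \ge X^2_t) = 1$.

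The step I expect to be the main obstacle is making the drift term fully rigorous: one must be careful that $X^1_s, X^2_s \ge 0$ so that the Hölder bound $(\sigma(x)-\sigma(y))^2 \le \sigma^2 |x-y|^{2\alpha}$ is valid on the relevant domain (this relies on Theorem~\ref{thm:access}-style nonnegativity, or rather the nonnegativity established in the proof of Theorem~\ref{theorsyst} via comparison with the zero solution — here one can instead argue nonnegativity of each $X^i$ directly since $U^i \ge 0$), and that the localisation needed to justify the martingale property of $M$ and to interchange expectation with the Itô integral is handled by a stopping-time argument using \eqref{sup_eq} before passing to the limit. Everything else is the routine Yamada--Watanabe machinery, and the one-sidedness of the $U$-comparison only helps.
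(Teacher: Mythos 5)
Your overall strategy is sound and genuinely different from the paper's proof: the paper introduces $V_t=U^1_t-U^2_t-b\left(X^1_t-X^2_t\right)$, splits time into alternating random intervals according to the sign of $V$, runs the Yamada approximation of $\abs{x}$ only on the intervals where $V\ge 0$, and closes the argument by induction over these intervals; you instead run the classical one-sided Ikeda--Watanabe comparison argument in one stroke, with $\phi_n$ approximating the positive part, using the observation that the Lipschitz part $-bx$ of the drift contributes with a favourable sign, so no Gronwall step and no time-splitting are needed. That route does work here (the randomness of $U^1,U^2$ is harmless since they are adapted and ordered pathwise), and it is arguably cleaner than the paper's argument.

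However, your write-up contains a sign inconsistency that must be repaired. With your choice $\Delta_t=X^1_t-X^2_t$ and $\phi_n\uparrow x^+$, the drift contribution $\phi_n'(\Delta_s)\left(U^1_s-U^2_s\right)$ is $\ge 0$, not $\le 0$ (both factors are nonnegative), so it cannot be dropped; and the bound you then claim, $\ex\,\Delta_t^+=0$, would say $X^1_t\le X^2_t$, which is the reverse of the theorem and contradicts your own final sentence. The cure is simply to apply the identical machinery to $\Delta_t=X^2_t-X^1_t$: then $\phi_n'(\Delta_s)\left(U^2_s-U^1_s\right)\le 0$ by (iii), $-b\,\phi_n'(\Delta_s)\Delta_s\le 0$ because $\phi_n'$ vanishes on $(-\infty,0]$, the Yamada estimate bounds the second-order term by $\sigma^2 t/n$, and letting $n\to\infty$ gives $\ex\left(X^2_t-X^1_t\right)^+=0$ for each $t$, hence $X^1_t\ge X^2_t$ for all $t$ by continuity. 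Two smaller points: the theorem assumes only $\sup_{0\le t\le T}\ex U^i_t<\infty$, so the bound \eqref{sup_eq} of Theorem \ref{th:ex-un} (which requires second moments of $U^i$) is not automatically available --- justify $\ex M_t=0$ by localising with $\inf\set{t: X^1_t+X^2_t\ge N}$ and pass to the limit by Fatou, rather than by invoking \eqref{sup_eq}; and the nonnegativity of each $X^i$, needed for the H\"older bound on $x\mapsto x^{\alpha}$, should be stated explicitly (e.g.\ by comparison with the zero solution, as in the proof of Theorem \ref{theorsyst}), exactly as you anticipate in your closing remarks.
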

\begin{proof}
Introduce the process
\[
V_t = U^1_t - U^2_t - b \left(X^1_t - X^2_t\right), \quad t \ge 0,
\]
and consider the following stopping times:
\begin{gather*}
\tau_0 = 0, \quad
\tau_1 = \inf\set{t > 0 : V_t < 0}, \quad
\tau_2 = \inf\set{t > \tau_1 : V_t > 0}, \dots,\\
\tau_{2n+1} = \inf\set{t > \tau_{2n} : V_t < 0}, \quad
\tau_{2n+2} = \inf\set{t > \tau_{2n+1} : V_t > 0}, \quad n \ge 0.
\end{gather*}
Note that on any interval
$[\tau_{2n+1}, \tau_{2n+2}]$, $n \ge 0$,
we have that
\[
X^1_t - X^2_t \ge b^{-1} \left(U^1_t - U^2_t\right) \ge 0.
\]
Therefore, it is sufficient to consider only the intervals
$[\tau_{2k}, \tau_{2k+1}]$.
On any such interval we apply the standard method of Ikeda and Watanabe from \cite{ikewat}.
Namely, let
$\varphi_n(u)$, $n \ge 1$, $u \ge 0$,
be a non-negative continuous function such that its support is
$(a_n, a_{n-1})$, $\int_{a_n}^{a_{n-1}} \varphi_n(u) \,du = 1$,
$\varphi_n(u) \le \frac{2}{n u^{2\alpha}}$,
where the sequence
$a_0 = 1 > a_1 > \dots > a_n > \dots \downarrow 0$
is defined by
\[
\int_{a_n}^{a_{n-1}} u^{-2\alpha} \,du = \frac{1}{2\alpha - 1} \left(a_n^{1-2\alpha} - a_{n-1}^{1-2\alpha} \right) = n.
\]
Define also
\[
\psi_n(x) = \int_0^{\abs{x}} dy \int_0^y \varphi_n(z)\,dz,
\quad x \in \real, \quad n \ge 1.
\]
Then $\psi_n \in C^2(\real)$, $\psi_n(x) \uparrow \abs{x}$ as $n \to \infty$, and $\abs{\psi_n'(x)} \le 1$.
Using the It\^o formula, we can write
\begin{align*}
\MoveEqLeft
\psi_n \left(X^1_{t\wedge \tau_{2k+1}} - X^2_{t\wedge \tau_{2k+1}}\right)
- \left(X^1_{t\wedge \tau_{2k}} + X^2_{t\wedge \tau_{2k}}\right)
\\
&= \sigma \int_{t\wedge \tau_{2k}}^{t\wedge \tau_{2k+1}} \psi_n'\left(X^1_s - X^2_s\right) \left(\left(X^1_s\right)^\alpha - \left(X^2_s\right)^\alpha \right) dW_s 
\\
&\quad + \int_{t\wedge \tau_{2k}}^{t\wedge \tau_{2k+1}} \psi_n'\left(X^1_s - X^2_s\right) \left(U^1_s - U^2_s - b \left(X^1_s - X^2_s\right)\right) ds 
\\
&\quad + \frac12 \int_{t\wedge \tau_{2k}}^{t\wedge \tau_{2k+1}} \psi_n''\left(X^1_s - X^2_s\right) \left(\left(X^1_s\right)^\alpha - \left(X^2_s\right)^\alpha\right)^2 ds
\\
&\eqqcolon J_{1,n} + J_{2,n} + J_{3,n}.
\end{align*}
Then
\[
\ex J_{1,n} = 0,
\]
and 
\[
\ex J_{3,n} \le \ex \int_{t\wedge \tau_{2k}}^{t\wedge \tau_{2k+1}} \varphi_n\left(\abs{X^1_s - X^2_s}\right) \abs{X^1_s - X^2_s}^{2\alpha} ds
\le \frac{t}{n} \to 0 \quad \text{as } n \to \infty.
\]
Note also that
$\abs{\psi_n'(x)} \le 1$.
Therefore
\begin{align*}
\MoveEqLeft 
\ex\abs{X^1_{t \wedge \tau_{2k+1}} - X^2_{t \wedge \tau_{2k+1}}}
- \ex\abs{X^1_{t \wedge \tau_{2k}} - X^2_{t \wedge \tau_{2k}}}
\\
&\le \limsup_{n\to\infty} \ex J_{2,n}
\le \ex\int_{t\wedge \tau_{2k}}^{t\wedge \tau_{2k+1}} \abs{U^1_s - U^2_s - b \left(X^1_s - X^2_s\right)} ds 
\\
&=  \ex\int_{t\wedge \tau_{2k}}^{t\wedge \tau_{2k+1}} \Bigl(U^1_s - U^2_s - b \left(X^1_s - X^2_s\right)\Bigr) ds 
\\
&= \ex \left(X^1_{t \wedge \tau_{2k+1}} - X^2_{t \wedge \tau_{2k+1}} - X^1_{t \wedge \tau_{2k}} + X^2_{t \wedge \tau_{2k}}\right).
\end{align*}

Now let us use induction in $k$.
Obviously,
$X^1_{t \wedge \tau_0} - X^2_{t \wedge \tau_0} = 0$.
Assume that 
$X^1_{t \wedge \tau_{2k}} - X^2_{t \wedge \tau_{2k}} \ge 0$,
then
\[
\ex\abs{X^1_{t \wedge \tau_{2k+1}} - X^2_{t \wedge \tau_{2k+1}}}
\le \ex\left(X^1_{t \wedge \tau_{2k+1}} - X^2_{t \wedge \tau_{2k+1}}\right),
\]
whence
$X^1_{t \wedge \tau_{2k+1}} - X^2_{t \wedge \tau_{2k+1}} \ge 0$ a.s.
Since $\bigcup_{n=0}^\infty[\tau_{2n},\tau_{2n+1}] = \real^+$,
we get the proof.
\end{proof}

\providecommand{\MR}{\relax\ifhmode\unskip\space\fi MR }
\providecommand{\MRhref}[2]{%
  \href{http://www.ams.org/mathscinet-getitem?mr=#1}{#2}
}
\providecommand{\href}[2]{#2}

\end{document}